\documentclass[letterpaper]{article} 
\usepackage[submission]{aaai2026}  
\usepackage{times}  
\usepackage{helvet}  
\usepackage{courier}  
\usepackage[hyphens]{url}  
\usepackage{graphicx} 
\usepackage{natbib}  
\usepackage{caption} 
\usepackage{algorithm}
\usepackage{algorithmic}

\usepackage{newfloat}
\usepackage{listings}
\DeclareCaptionStyle{ruled}{labelfont=normalfont,labelsep=colon,strut=off} 
\floatstyle{ruled}
\newfloat{listing}{tb}{lst}{}
\floatname{listing}{Listing}
\newif\iflong
\newif\ifshort
\newif\ifjournal

\longtrue

\iflong
\else
\shorttrue
\fi

\usepackage{complexity}
\usepackage{xcolor}
\usepackage{amsfonts}
\usepackage{amsthm}
\usepackage{amsmath}
\usepackage{mathtools}

\newcommand{\mpc}{\textsc{Maximum Price Coverage}}
\newcommand{\mpcShort}{\textsc{MPC}}
\newcommand{\mrbc}{\textsc{Red-Blue Reinforcement}}
\newcommand{\mrbcShort}{\textsc{R-BR}}

\newtheorem{theorem}{Theorem}

\newtheorem{property}{Property}

\newcommand{\qedclaim}{\hfill $\diamond$ \medskip}
\newenvironment{sketch}{\noindent{\it Sketch of proof.}}{\qedclaim}

\newcommand{\defproblem}[3]{
    \vspace{3mm}
    \noindent\fbox{
        \begin{minipage}{0.45\textwidth}
            #1\newline
            \textbf{Input:} #2\\
            \textbf{Task:} #3
        \end{minipage}
    }
    \vspace{3mm}
}

\title{Exact Algorithms for Resource Reallocation Under Budgetary Constraints}
\author {
    Arun Kumar Das\textsuperscript{\rm 1},
    Sandip Das\textsuperscript{\rm 2},
    Sweta Das\textsuperscript{\rm 2},
    Foivos Fioravantes\textsuperscript{\rm 3},
    Nikolaos Melissinos\textsuperscript{\rm 4},
}
\affiliations {
    \textsuperscript{\rm 1}University of Hyderabad\\
    \textsuperscript{\rm 2}Indian Statistical Institute\\
    \textsuperscript{\rm 3}Czech Technical University\\
    \textsuperscript{\rm 4}Charles University\\
}

\usepackage{bibentry}

\begin{document}

\maketitle
\begin{abstract}

Efficient resource (re-)allocation is a critical challenge in optimizing productivity and sustainability within multi-party supply networks. In this work, we introduce the \textsc{Red-Blue Reinforcement} (R-BR) problem, where a service provider under budgetary constraints must minimize client reallocations to reduce the required number of servers they should maintain by a specified amount. We conduct a systematic algorithmic study, providing three exact algorithms that scale well as the input grows (FPT), which could prove useful in practice. Our algorithms are efficient for topologies that model rural road networks (bounded distance to cluster), modern transportation systems (bounded modular-width), or have bounded clique-width, a parameter that is of great theoretical importance. 
\end{abstract}

\section{Introduction}
Efficient \emph{resource reallocation} has risen as a fundamental challenge towards enhancing productivity and sustainability~\cite{gupta2023resource,syrquin1984resource}. In a multi-party supply network, digital or otherwise, a vast number of clients and servers interact, where servers fulfill client demands by providing essential resources (e.g., data, goods, services, computational power, etc.). 

In this paper, we focus on the scenario where a service provider is forced to work under restrictions that do not allow for the serving of all clients. In such instances, we consider the optimization problem faced by the provider, who has to minimize the number of clients who get \textit{reallocated} in order 
to efficiently serve the remaining clients. A client is reallocated if a new connection is built between that client and an existing server. Clearly, building such connections can be costly, hence the need for minimizing these reallocations. 
Specifically, we seek the minimum number of clients whose reallocation leads to a reduction in the required number of servers supported by the considered provider by a specified amount. We model this scenario as an optimization problem defined on a network. On a high level, we are given an abstract network whose nodes are either red, or blue, or of both colors, signifying that the corresponding real nodes are servers, clients, or fulfill both roles. The links between these colored nodes signify the service relationships between servers and clients. 
Given a budget of $\gamma$ servers, our task is to find the minimum number of clients whose reallocation to a different provider allows for $\gamma$ servers to efficiently serve all the remaining clients. That is, we are looking for the minimum number of blue nodes, whose removal results in a network where $\gamma$ red nodes are linked to all the remaining blue nodes. Note that removing these nodes is effectively the same as adding new connections. We postpone the formal definition to the preliminaries section. 

Reallocating clients as a means of reducing the number of servers is not a novel idea. Examples of this strategy can be seen in, e.g., the \emph{team member reallocation}~\cite{agmon2005team} and \emph{facility reallocation}~\cite{fotakis2019reallocating} problems. The former focuses on leader minimization in coordinated task execution. A group of robots is assigned to perform a task, with a few of them designated as leaders, while the remaining robots function as team members under different leaders. The objective is to determine the minimum number of team members that must be reallocated to different leaders so that the total number of leaders required to successfully complete the operation decreases by at least a specified amount. The latter considers the case where a set of facilities is responsible for serving a group of clients, and a reallocation strategy must be devised respecting some specific cost constraints. We stress here that in this version, no server is allowed to simultaneously act as a client. 

Apart from its practical applications, the resource reallocation problem that we formulate is also interesting from a theoretical point of view as it generalizes the problem of computing the so-called \emph{reinforcement number}~\cite{blair2008domination} of a network, which is itself a generalization of the well-known \textit{domination number}. Simply put, given a network that can be dominated by at least $\gamma$ nodes (all the nodes of the network are linked to at least one of these $\gamma$ special nodes), 
the reinforcement number of that network is the minimum number of nodes whose removal results in a network that can be dominated by strictly less than $\gamma$ nodes. In other words, our problem coincides with computing the reinforcement number if all nodes in the given network are both red and blue; thus, we name the problem that we introduce in this work as the \mrbc{} (\mrbcShort{} for short) problem. 

\paragraph{Our contribution}
In this work, we introduce the \mrbcShort{} problem and initiate its formal study. Our work focuses on exact algorithms. That is, the aim is to construct algorithms that return an optimal solution with an accompanying theoretical guarantee of correctness. However, given the aforementioned link between our problem and the domination number, we can deduce that solving the \mrbcShort{} is computationally infeasible (\NP-hard). Thus, the central question becomes to identify for which topologies of networks this can be done efficiently. To achieve this, we employ the toolkit of \textit{parameterized complexity}, and construct \textit{Fixed Parameter Tractable} (\FPT) algorithms for specific structural parameters. 

Given a network with $n$ nodes, we present three results:
\begin{enumerate}
    \item An algorithm that runs in $3^{dc} n^{O(1)}$ time, where $dc$ is the \textit{distance to cluster} of the given network. 
    \item An algorithm that runs in $2^{mw} n^{O(1)}$ time, where $mw$ is the \textit{modular-width} of the given network. 
    \item An algorithm that runs in $4^{cw} n^{O(1)}$ time, where $cw$ is the \textit{clique-width} of the given network. 
\end{enumerate}

The choice of these parameters is not arbitrary. First, the parameter of distance to cluster captures how far the network is from being a collection of cliques. Such networks are important when considering the facility location problem in a large rural area. In this setting, the servers represent key facilities (e.g., hospitals, schools, etc.) which should be placed in strategic positions so that they can serve as many citizens as possible. It can be expected that such rural areas are comprised of many small and densely populated communities (the cliques) that are interconnected through a few highways. 

Then we study the parameter known as modular-width, which can be used to capture the hierarchical structure of modern transportation systems. Roughly speaking, a module in a network is a subset $M$ of nodes that share the same neighborhood outside $M$. In real-life terms, a module can be a ``neighborhood'', which is linked to the larger module of ``city'' through local public transportation. Then the module ``city'' is linked to the larger module of ``country'' through more large-distance means of transportation, and so on. Once more, we can expect that this hierarchical structure plays an important role in the placement of important facilities. See~\cite{HP10} for a survey on uses of modular decompositions. 

Finally, the clique-width is a fairly important parameter as it generalizes the widely used parameter of treewidth, which measures how ``tree-like'' is the input network is~\cite {CourcelleO00}. In contrast to the treewidth, the clique-width also captures a significantly larger family of networks, including many dense ones. In the current landscape of parameterized complexity, there are few parameters that are as general as the clique-width. Moreover, the running time of this third algorithm is optimal under the SETH~\cite{IP01}, painting a fairly comprehensive picture for the parameterized tractability of our problem. 

It is worth noting that structural graph parameters like the ones we consider here are increasingly being considered in multiple works on theoretical aspects of network analysis~\cite{blavzej2023parameterized,eiben2023parameterized}, path finding~\cite{fioravantes2024exact}, knowledge representation~\cite{gottlob2010bounded}, constraint satisfaction~\cite{ganian2022threshold}, planning~\cite{de2013parameterized}, etc.

\section{Preliminaries}

For $a\in \mathbb{N}$, let $[a] = \{1,\dots, a\}$.

The modeling of networks is done through the toolkit of graph theory. We follow standard graph-theoretic notation~\cite{D12}.
Let $G=(V,E)$ be a graph.
For a subset of vertices~\(U \subseteq V\) and a vertex $u \in V$ we denote by $N_{U}(u)$ the \textit{neighborhood} of $u$ in $U$, i.e., the set of vertices of~\(U\) that are adjacent to~\(u\).
By~\(N_U[u]\) we mean~\(N_U(u) \cup \{u\}\).
When clear from the context, the subscripts will be omitted.
Also, for any $S\subseteq V$, let $G[S]$ be the subgraph of $G$ that is induced by~$S$, i.e., the graph that remains after deleting the vertices of $V\setminus S$ from $G$ (along with their incident edges). Finally, for any set $S$ we denote the \textit{powerset} of $S$ by $\mathcal{P}(S)$.

Formally, the input of the \mrbcShort{} problem consists of a \textit{red-blue graph} $G=(V,E)$. That is, apart from $G$, we are also given two sets $R,B\subseteq V$ such the vertex set $V=R \cup B$. The sets $R$ and $B$ correspond to the red and blue vertices of $G$, respectively. We stress that $(R,B)$ is not necessarily a partition of $V$, i.e., it may be that $R\cap B\neq \emptyset$. To avoid overloading notations, we will implicitly assume that the sets $R$ and $B$ are well defined and given whenever we are considering a red-blue graph. 

The \textsc{Red-Blue Domination Number} of a red-blue $G$, denoted as $\gamma_{rb}(G)$, is the smallest integer such that there exists a set $D\subseteq R$ with $|D| = \gamma_{rb}$, and $D$ \textit{dominates} all vertices of $B$. That is, for every vertex $v\in B$ we have $N[v]\cap D \neq \emptyset$. 
The decision version of the problem we consider is as follows:

\defproblem{\mrbc{} (\mrbcShort)}{
    A red-blue graph $G$ and two positive integers $k$ and $\alpha$.
}{
    Decide if there exists a set $S\subseteq B$ such that $|S|\le k$ and $\gamma_{rb}(G[V\setminus S])\le  \gamma_{rb}(G)-\alpha$.
}

We may alternatively define an integer $\gamma:=\gamma_{rb}(G)-\alpha$ and answer the question of whether $\gamma_{rb}(G[V\setminus S])\le\gamma$. 


\subsection{Parameterized Complexity}
\emph{Parameterized complexity} is a domain of algorithm design which considers additional measures of complexity, known as \textit{parameters}. Roughly, it proposes a multidimensional analysis of the time complexity of an algorithm, each dimension corresponding to its own parameter.
Formally, a parameterized problem is a set of instances $(x,k) \in \Sigma^* \times \mathbb{N}$; $k$ is the \textit{parameter}.
Algorithmic efficiency in this paradigm is captured by the notion of \emph{Fixed-Parameter Tractability} (FPT). These are algorithms that solves the problem in $f(k)|x|^{O(1)}$ time for any arbitrary computable function $f\colon \mathbb{N}\to\mathbb{N}$. We say that a problem \textit{is in FPT} if it admits an FPT algorithm. We refer the interested reader to classical monographs~\cite{CyganFKLMPPS15,downey2012parameterized} for a more in-depth introduction to this topic.

\subsection{Structural Parameters}
There are three structural parameters that we consider in this work. Let $G=(V,E)$ be a graph. Unless otherwise specified, we set $n=|V|$ and $m=|E|$.

We begin with the \emph{distance to cluster} of $G$. A set $M\subseteq V$ is called a \emph{cluster deletion set} of $G$ if $G[V\setminus M]$ is a \textit{cluster}, i.e., a collection of disjoint cliques. The distance to cluster number of $G$ is the size of a minimum cluster deletion set of $G$. This is a parameter that can be computed in \FPT{} time, e.g., by~\cite{BCKP16}.

We then consider the \textit{modular-width} of $G$, initially introduced in~\cite{CMR00} and further studied in~\cite{GLO13}. A \textit{modular decomposition} of $G$ is a rooted tree $T_G$. 
Each node $b$ of $T_G$ represents a graph $G_b=(V_b,E_b)$, defined as follows:
\begin{itemize}
    \item If $b$ is a leaf node, then $G_b$ is an isolated vertex.
    \item Otherwise, $b$ has $c$ children and is related to a graph $H_b$ that has $c$ vertices $w_1,\dots,w_c$. Then $G_b$ is constructed as follows:  
    \begin{itemize}
        \item First, we consider the graphs $G_1,\ldots, G_c$, where $G_i$ is the graph represented by the $i^{th}$ child of $b$.
        \item We set $V(G_b) =\bigcup_{i \in [c]} V(G_i)$ and 
        \item $E(G_b) = \bigcup_{i\in [c]} E(G_i) \cup \{ uv \mid u \in V(G_i),v \in V(G_j) \textrm{ and } w_iw_j \in E(H_b)\}$. 
    \end{itemize} 
\end{itemize}
The \textit{width} of a modular decomposition $T_G$ is equal to the maximum number of children that any node of $T_G$ has.
A modular decomposition $T_G$ is a modular decomposition of $G$ if the root of $T_G$ represents $G$. 
A modular decomposition $T_G$ of minimum width can be computed in time $O(n+m)$~\cite{TCHP08} and has $O(n)$ nodes.

Finally, we have the \textit{clique-width} of $G$, whose definition passes through that of \textit{labeled graphs}. The graph $G$ is  \textit{labeled} if it is accompanied by a label function $\ell_G: V \rightarrow \mathbb{N}$. Given such a graph, we say that vertex $v$ has label $\ell_G(v)$. The clique-width of $G$ is the minimum number of labels needed to construct $G$ via labeled graphs over the following four operations: (1) \textit{introduce} a new \textit{vertex} with
label $\ell$, (2) \textit{rename} label $\ell$ to $\ell'$, (3) \textit{introduce all edges} between vertices labeled $\ell$ and $\ell'$, and (4) create the disjoint \textit{union} of two labeled graphs. 

\section{Efficient Algorithm I: Distance to Cluster}

In this section we prove that the \mrbc{} can be solved in \FPT{} time w.r.t. the distance to cluster of the input graph. 

Let $G=(V,E)$ be a red-blue graph, and $M$ be a cluster deletion set of $G$. 
On a high level, we first ``guess'' which of the vertices of $M$ will belong to the dominating set $D$. Then, we need to solve the problem of finding a set $U\subseteq V\setminus M$ that maximizes how many vertices are dominated if we include $U$ into $D$. To do so, we introduce the \mpc{} (\mpcShort{}), a variant of the classical \textsc{Maximum Coverage} problem. Roughly, we are given a set of elements belonging to different sets, each element and set having an accompanying \textit{price}. The goal is to identify the smallest number of sets that maximize the gathered price. We first give an efficient algorithm that solves \mpcShort{}. 
We then reduce the problem of computing a solution for \mrbcShort{} into computing a solution of \mpcShort{}.

\paragraph{An efficient algorithm for \mpcShort{}.}

We begin by formally defining the problem.

\defproblem{\mpc}{
    A set of elements $\mathcal{U}=\{e_1,\ldots,e_{\nu}\}$ called universe, a collection of sets $\mathcal{S}= \{S_1,\ldots, S_{\mu}\}$ of subsets of $\mathcal{U}$, a partition $\mathcal{X}=\{X_1,\ldots,X_{\ell}\}$ of $\mathcal{S}$, a price function $P:\mathcal{U}\cup \mathcal{X} \rightarrow \mathbb{N}$ and a positive integer $K$.
}{
    Find a set $C\subseteq \mathcal{S}$ of order at most $K$ of maximum price, i.e., that maximizes $pr(C)=\sum_{ e\in \bigcup S_i, S_i\in C } P(e) + \sum_{ X_i\cap C \neq \emptyset} P(X_i)$.
}

We develop an algorithm that solves \mpcShort{} in time $2^{\nu} (\nu+\mu)^{O(1)}$. To the best of our knowledge, this is the first time that this problem has been introduced. 
Our algorithm is heavily influenced by the classic dynamic programming algorithm for the \textsc{Set Cover} problem, which runs in exponential time in the number of elements~\cite{CyganFKLMPPS15}.
We stress, however, that some non-trivial modifications to this classic algorithm were needed to achieve our result. 

\begin{theorem}
    Let $(\mathcal{U}, \mathcal{S}, \mathcal{X}, P, K)$ be an instance of \mpcShort. There is an algorithm that computes a set $C\subseteq \mathcal{S}$ of maximum price in time $2^{|\mathcal{U}|}(|\mathcal{U}|+|\mathcal{S}|)^{O(1)}$.
\end{theorem}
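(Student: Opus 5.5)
We plan a dynamic program over subsets of the universe, in the spirit of the classic $2^{\nu}$ algorithm for \textsc{Set Cover}, with the partition $\mathcal{X}$ handled by processing the parts one at a time. Order the parts $X_1,\dots,X_\ell$. For $i\in\{0,\dots,\ell\}$, $j\in\{0,\dots,K\}$ and $W\subseteq\mathcal{U}$, let $T[i,j,W]$ be the maximum of $\sum_{X_t\cap C\neq\emptyset} P(X_t)$ over all $C\subseteq X_1\cup\dots\cup X_i$ with $|C|\le j$ and $\bigcup_{S\in C}S = W$. If no such $C$ exists, set $T[i,j,W]=-\infty$. Note that $|C|\le \mu$, so $K$ may be capped at $\mu$.

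\textbf{Final answer.} Once the table is filled, the optimum is $\max_W \big(T[\ell,K,W]+\sum_{e\in W}P(e)\big)$. The element prices depend only on the union $W$, and the part prices are exactly what $T$ maximises for that union, so this expression equals the maximum of $pr(C)$.

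\textbf{Transitions.} The obstacle is that one part $X_i$ may contribute several sets at once, and enumerating subsets of $X_i$ is exponential in $|X_i|$, not in $\nu$. We avoid this with an inner dynamic program. For part $X_i$, let $A_i[j,W]$ be the minimum number of sets from $X_i$, at least one, whose union is exactly $W$ (or $\infty$ if impossible). This is computed like \textsc{Set Cover}: iterate over the sets of $X_i$, and for each $S$ and each $W$ relax $A_i[W\cup S]$ from $A_i[W]+1$.

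Only the minimum count matters per union, because using fewer sets for the same union is never worse. This gives $A_i[W']$ for all $W'\subseteq\mathcal{U}$ in $2^{\nu}\cdot|X_i|\cdot\nu^{O(1)}$ time. The recurrence is then
\[
T[i,j,W]=\max\Big(T[i-1,j,W],\ \max_{W_1\cup W_2=W}\big(T[i-1,j-A_i[W_2],W_1]+P(X_i)\big)\Big).
\]

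\textbf{Keeping the time at $2^{\nu}$.} The combination over pairs $W_1\cup W_2=W$ is the expensive step: done naively it costs $3^{\nu}$, and avoiding that is the main point. Instead, we carry the part-$i$ sets inside the state. Define an auxiliary table $T'[i,j,W,b]$, where $b\in\{0,1\}$ records whether some set of $X_i$ has already been chosen. Then process the sets of $X_i$ one by one, using the same relaxation as the set-cover DP: from state $(j,W,b)$, adding $S$ leads to state $(j+1,W\cup S,1)$.

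When we move from part $i$ to part $i+1$, we take $T[i,j,W]=T'[i,j,W,b]+b\cdot P(X_i)$ and reset $b$ to $0$. Each set is processed once over $2^{\nu}(K+1)\cdot 2$ states. The total time is therefore $2^{\nu}\,\mu\,(\mu+1)\,\nu^{O(1)}=2^{|\mathcal{U}|}(|\mathcal{U}|+|\mathcal{S}|)^{O(1)}$.

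\textbf{Correctness.} We would prove by induction on the processed sets that each entry equals the stated optimum, which requires two inclusions:
\begin{itemize}
\item[(i)] Every $C$ corresponds to a path of relaxations: add its sets in order, ending at state $(|C|,\bigcup C,\cdot)$. Hence every feasible $C$ is accounted for.
\item[(ii)] Every table value is witnessed by an actual $C$ of the same size, union and part-price. The subtle point is the flag $b$: it guarantees that $P(X_i)$ is counted exactly once when $X_i\cap C\neq\emptyset$ and never otherwise.
\end{itemize}
To output the set $C$ itself, and not only its price, we store back-pointers with each table entry.
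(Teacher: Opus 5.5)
Your final algorithm, after you drop the $3^{\nu}$ pairwise-union recurrence, is essentially the paper's. Both process the sets one at a time with each part's sets consecutive, and index a table by a subset of $\mathcal{U}$, the number of sets used, and a boolean flag recording whether the current part has already been hit. This counts $P(X_i)$ exactly once and gives $2^{\nu}$ times polynomially many cells with constant work each. The differences are cosmetic: the paper indexes by a subset that must be \emph{covered}, uses a pull-style recurrence on $U\setminus S_j$, and adds $P(X^j)$ when the first set of a part is taken. You instead track the exact union with forward relaxation and add $P(X_i)$ when closing the part; there you should take the maximum over $b\in\{0,1\}$ rather than a single $b$.
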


\iflong
\begin{proof}
First, we reorder the sets of $\mathcal{S}$ into $S_1,\ldots,S_\mu$ such that if $S_i \in X_{i'} $ and $S_j \in X_{j'} $ for some $1\le i <j \le \mu$, then $i'\le j'$.
Also, hereafter we will denote by $X^i$ the set $X_{i'}\in \mathcal{X}$ that contains the set $S_i$. 

We define a table $F[U,q,i,x]$ where $U\subseteq \mathcal{U}$, $q \in [\mu]\cup \{0\}$, $i\in [\mu]$ and $x\in \{true,false\}$. 
In this table, for any $i\ge 1$ and $x\in \{true,false\}$, we want to keep the \textit{maximum profit} $ \sum_{ X_i\cap C \neq \emptyset} P(X_i)$ that we can achieve from any set $C$ such that:
\begin{itemize}
    \item $\bigcup_{S_i\in C} S_i \supseteq U$,
    \item $|C|=q$,
    \item $C\subseteq \{S_1,\ldots,S_i\}$ and 
    \item $ X^{i} \cap C \neq \emptyset$ if and only if $x=true$.
\end{itemize}
For $i=0$, we assume that we have no sets to use. In the case that the task indicated by the cell of the table is impossible, the table keeps the value $-\infty$. 

First, we compute the values of the table for $i=1$. Notice that if $q=0$, we can only cover $U=\emptyset$. Also, we cannot use any set from $X^{1}$. Thus $F[\emptyset, 0, 1,true]=0$ while $F[U, 0,1,x]=-\infty$ for any $(U,x)\neq (\emptyset,false)$. Next, we consider the case where $q=1$.
Notice that for any set $U\subseteq S_1$, there is only one way to cover $U$. Then the profit is exactly $p(X^1)$. Thus, we set $F[U,1,1,true]=p(X^1)$ for all $U \subseteq S_1$. Any other set $U'\subseteq \mathcal{U}$ with $U'\neq U$ cannot be covered with just one set; thus $F[U',1,1,true]= - \infty$. 
Finally, there are not enough sets in order to consider the values $q\ge 2$; thus we set $F[U,q,1,x]=- \infty$ for all $U\subseteq \mathcal{U}$, $2\le q \le \mu$ and $x\in \{true, false\}$. 

Next, assuming that we have correctly computed the values of $F$ for all $U\subseteq \mathcal{U}$, $1\le i < j \le \mu$, $q\in [\mu]$ and $x\in \{true,false\}$, we will show how to compute the values of $F$ for $j$. 
Again, we start by setting $F[\emptyset, 0,j,false]=0$ and $F[U, 0,j,x]=-\infty$ for any $(U,x)\neq (\emptyset,false)$. We continue by considering values of $q\ge 1$.

First, we show how to compute the values of $F$ for $x=false$. In this case, we cannot use the set $S_j$; therefore, the maximum profit must appear in the values of $F$ for $j-1$. In particular, we need to consider whether $X^{j-1}=X_j$ or not. If $X^{j-1}=X_j$, we set $F[U,q,j,false]= F[U,q,j-1,false]$ as we should not use any set from $X^{j-1}=X_j$. 
Otherwise, $S_j$ is the only set of $X^j$ that appears in $\{S_1\ldots,S_j\}$ (by the ordering of the $S_i$'s); therefore we set  $F[U,q,j,false]= \max \{ F[U,q,j-1,true],F[U,q,j-1,false] \}$ as it does not matter if we are using sets from $X^{j-1}$.

We now show how to compute the values of $F$ for $x=true$. Again, we need to consider whether $X^{j-1}=X_j$ or not. Assume first that $X^{j-1}=X_j$. 
We claim that $F[U,q,j,true]= \max \{F[U,q,j-1,true], F[U \setminus S^j,q-1,j-1,true], F[U \setminus S^j,q-1,j-1,false] +P(X^j) \}$. 
Indeed, there are only three cases that achieve the maximum profit:
\begin{enumerate}
    \item We exclude $S_j$. Thus $S_j$ should already have appeared in $F[U,q,j-1,true]$ ($x=true$ because we have assumed that $X^{j-1}=X_j$). 
    \item We use $S_j$ and another set from $S_{j'}\in X^j$ such that $j'<j$; this should include $q-1$ sets (excluding $S_j$) and must cover $U\setminus S_j$. However, since we have already included a set from $X^j$, we do not increase the profit by including $S_j$. Thus, this profit is exactly the one in $F[U\setminus S_j ,q-1,j-1,true]$. 
    \item We use $S_j$ but no other set from $X^j$; this should include $q-1$ sets (excluding $S_j$) and must cover $U\setminus S_j$. Since no other set from $X^j$ is included, we need to add $P(X^j)$. This is exactly $F[U\setminus S_j ,q-1,j-1,false]+ p(X^j)$. 
\end{enumerate}
Then we assume that $X^{j-1}\neq X_j$. Since we want to set $x=true$, it is mandatory to include $S_j$. Thus, we set $F[U,q,j,true]= \max \{F[U\setminus S_j, q-1,j-1,true] +P(X^j), F[U\setminus S_j, q-1,j-1,false] +P(X^j)\}$. 
Notice that we need to consider both $F[U\setminus S_j, q-1,j-1,true]$ and $F[U\setminus S_j, q-1,j-1,false]$ because $X^{j-1}\neq X_j$ and we do not know whether or not the maximum profit can be achieved with or without any sets from $X^{j-1}$ (as $X^{j-1}\neq X_j$). 

At this point, observe that including more sets in $C$ does not decrease the profit. Therefore, it suffices to compute the value of $\max_{U \subseteq \mathcal{U}, q \in [k]} \big\{ \sum_{e\in U} P(e) + \max \{ F[U,q,K,true], F[U,q,K,false]\} \big\}$. 
Finally, observe that throughout this process, we have also described how to construct the set $C$ for each cell of the table. 

As for the running time, in order to compute the value of $F$ of each cell, we just need to consider a constant number of options. Therefore, the algorithm runs in $2^{\nu}(\nu+\mu)^{O(1)}$ time. 
\end{proof}
\fi

\paragraph{Reducing \mrbcShort{} to \mpcShort{}.}
We are now ready to describe the reduction from \mrbcShort{} to \mpcShort{}. 
\begin{theorem}
    Let $\mathcal{I}=(G,k,\gamma)$ be an instance of \mrbcShort{}. We can decide if $\mathcal{I}$ is a yes-instance in time $3^{dc}n^{O(1)}$, where $n=|V(G)|$ and $dc$ is the distance to cluster number of $G$.
\end{theorem}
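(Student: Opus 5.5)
We first compute a minimum cluster deletion set $M$ of $G$ in FPT time, e.g., by~\cite{BCKP16}, so $|M|=dc$. Let $Q_1,\dots,Q_t$ be the cliques of $G[V\setminus M]$. We then guess a partition of $M$ into three parts: a set $D_M\subseteq M\cap R$ of vertices that will belong to the dominating set of $G[V\setminus S]$, a set $S_M\subseteq M\cap B$ of vertices that will be deleted (put into $S$), and the remaining vertices $M\setminus(D_M\cup S_M)$. Every vertex of $M$ falls into exactly one of these parts, and we discard guesses that are inconsistent with the colors. This gives at most $3^{dc}$ guesses.

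For a fixed guess, every vertex of $M\setminus(D_M\cup S_M)$ that is blue and not dominated by $D_M$ must be dominated by a red vertex of some clique. Let $W$ denote the set of these vertices, so $|W|\le dc$. The key observation concerns the blue vertices inside a clique $Q_j$. They are either dominated or deleted, and this depends only on whether some vertex of $Q_j\cap R$ is chosen into $D$. If such a vertex is chosen, all of $Q_j$ is dominated. If not, every blue vertex of $Q_j$ that is not dominated by $D_M$ must go into $S$.

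This yields an \mpcShort{} instance:
\begin{itemize}
\item the universe is $\mathcal{U}=W$;
\item each red vertex $v\in Q_j$ yields a set $N(v)\cap W$;
\item the sets coming from $Q_j$ form one part $X_j$ of $\mathcal{X}$;
\item the price $P(X_j)$ is the number of blue vertices of $Q_j$ not dominated by $D_M$, i.e., the deletions saved by opening $Q_j$.
\end{itemize}
For the element prices, we need every vertex of $W$ to be covered. We handle this by giving each element a large price, say $n^2$, so that maximizing price first maximizes coverage. Alternatively, we inspect the table $F[W,q,\cdot,\cdot]$ of the preceding theorem directly, which requires the full $W$ to be covered.

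We run the algorithm of the preceding theorem with budget $K=\gamma-|D_M|$, and we try every $q\le K$. Let $T$ be the total number of blue non-$M$ vertices not dominated by $D_M$. The number of deletions is then $|S_M|+T-(\text{profit from the }X_j)$. We accept if, for some guess and some $q$, all of $W$ is covered and this quantity is at most $k$.

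Correctness has two directions. In the forward direction, an optimal pair $(S,D)$ induces a guess and a feasible collection $C$. We may assume that $D$ uses at most one vertex per clique beyond what is needed for covering $W$, since extra vertices do not hurt because the count is bounded by $K$. We may also assume $S$ contains exactly the forced vertices. In the backward direction, a collection $C$ gives a dominating set $D_M\cup C$ of size at most $\gamma$ in $G[V\setminus S]$.

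The running time is $3^{dc}$ guesses, each costing $2^{|W|}n^{O(1)}\le 2^{dc}n^{O(1)}$. Naively this gives $6^{dc}$, so the main obstacle is obtaining $3^{dc}$. To fix this, we observe that the \mpcShort{} universe only contains vertices in the third part. A finer count then gives $\sum_{\text{guesses}}2^{|W|}\le\sum_{(a,b,c)}\binom{dc}{a,b,c}2^{c}=4^{dc}$, which is still too much. We would therefore merge the guess with the DP: instead of guessing the third part explicitly, we let the subset $U$ in the table range over $M\setminus D_M$. Blue vertices of $M\setminus D_M$ that end up uncovered are then charged as deletions, which is allowed since $S\subseteq B$. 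This requires adding price $-1$ per uncovered element, i.e., accounting through $|M\setminus D_M|-|U|$. The total becomes $\sum_{D_M}2^{dc-|D_M|}n^{O(1)}=3^{dc}n^{O(1)}$, as claimed.
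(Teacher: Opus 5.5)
Your final version is essentially the paper's proof: guess only $D_M=D\cap M$, take as universe the blue vertices of $M\setminus D_M$ not dominated by $D_M$, use one part per clique priced by its blue vertices not dominated by $D_M$, and charge uncovered elements as deletions, for a total of $\sum_i\binom{dc}{i}2^{dc-i}=3^{dc}$; the initial three-way guess of $S_M$ can therefore simply be dropped. Your budget $K=\gamma-|D_M|$ is the right one, since the paper's $K=k'$ appears to be a slip.
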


\begin{proof}
We first compute a minimum cluster deletion set $M$ of $G$. This can be done in $3^{dc}n^{O(1)}$, where $dc=|M|$. Recall that the graph $G$ is a red-blue graph, with $R,B\subseteq V$ being the respective red and blue vertices. Then, $\mathcal{I}$ is a yes-instance of \mrbcShort{} if we can construct a set $S\subseteq B$ with $|S|\leq k$ such that there exists a set $D\subseteq R$ that dominates all the remaining blue vertices of $G[V\setminus S]$, and $|D|\leq \gamma$. 

We begin by guessing which vertices of $M$ will be included in $D$ (there are $2^{dc}$ such guesses); let $D_M$ be this set. 
Assuming that we have made the correct guess, we create an equivalent instance of \mrbcShort{}
as follows: 
\begin{enumerate}
    \item We delete all the vertices in $B\setminus R$ that are dominated by $D_M$. 
    \item We delete all the vertices in $(R\setminus B)\cap M$. These vertices are redundant, as they do not belong to $B$ and do not need to be included in $D_M$.
    \item We remove from $R$ any vertex that belongs to $B\cap M$. Notice that we need to keep these vertices in $B$ as we do not know if they will be dominated by the final $D$ (or they will need to be included in $S$).
    \item We remove from $B$ any vertex that belongs to $R \setminus M$ and is dominated by $D_M$. Notice that we need to keep these vertices in $R$ as we may want to include them in the final $D$. 
    \item We remove $D_M$ from the graph (as these vertices will be in $D$ and they always dominate themselves).
\end{enumerate}
Let $G'=(V',E')$ be the new graph, $R'$, $B'$ be the new sets of red and blue vertices respectively, and $M'=M\setminus D_M$. Notice that $G'[V'\setminus M']$ is also a cluster. Moreover, if there exists a clique $Q$ in $G'[V'\setminus M']$ that does not include any red vertex, then we delete $V(Q)$ and reduce $k$ by $|Q|$.
Finally, we reduce $\gamma$ by $|D_M|$ as $D_M$ will be part of the final $D$. Let $\mathcal{I}'=(G',k',\gamma')$ be the new instance of \mrbcShort{}. Notice that $\mathcal{I'}$ is a yes-instance of \mrbcShort{} if and only if $\mathcal{I}$ is a yes-instance of \mrbcShort{}.

From the construction of $\mathcal{I}'$, we have, $M'\cap R = \emptyset$ and $V(Q)\cap R \neq \emptyset$, for any clique $Q$ in $G'[V'\setminus M']$. Thus, we need to decide how to optimally deal with these remaining, un-dominated, blue vertices. We do this by using our previous algorithm for \mpcShort{}.

We construct an instance $\mathcal{I}''=(\mathcal{U},\mathcal{S},\mathcal{X},P,K)$ of \mpcShort{} as follows. 
For each vertex $u \in M'$, we create one element $e_u \in \mathcal{U}$. For the family $\mathcal{S}$, for each red vertex $v \in R$, we create a set $S_v\in \mathcal{S}= \{e_u\mid u \in N_{G'}(v)\cap M' \}$, and add it to $\mathcal{S}$. Next, we define $\mathcal{X}$. For each clique $Q$ of $G'[V'\setminus M']$ we create a set $X_Q = \{S_v\mid v \in V(Q)\cap R\}$. Notice that each red vertex belongs to a unique clique of $G'[V'\setminus M']$. Thus, the defined $\mathcal{X}$ is indeed a partition of $\mathcal{S}$. The price for each partition is as follows: we set $P(X_Q)= |V(Q)\cap B|$ for each $X_Q\in \mathcal{X}$, and we set $P(e_u)=1$ for each element $e_u \in \mathcal{U}$. 
Finally, we set the maximum number of sets we can use to $K=k'$. 

To see the equivalence between $\mathcal{I}'$ and $\mathcal{I}''$, it suffices to notice the following. First, any feasible solution $C$ of the $\mathcal{I}''$ represents a set $D_C\subseteq R$. Also, the price $pr(C)$ is equal to the number of blue vertices dominated by $D_C$ in $G'$.
Finally, maximizing the number of vertices we cover in $C$ minimizes the number of vertices that cannot be dominated by $D_C$ (and thus will need to be deleted from $V'$).

It remains to prove the claimed running time. 
First, notice that both $|\mathcal{U}|$ and $|\mathcal{S}|$ are bounded by the number of vertices in $G'$. Therefore, the polynomial factors that appear in the running time of the \mpcShort{} algorithm can be replaced by a polynomial of $n$, i.e., $|V'|$.
Next, recall that we started by guessing $D_M$, the intersection of $D$ and $M$, and that there are $2^{dc}$ such intersections. Then, the instance of \mpcShort{} we created for this guess has a number of elements equal to $|M\setminus D|$  (as we create one element per vertex of $M$ that remains in the graph after the deletion of $D_M$). For any size of intersection $i\le dc$, we create $dc \choose i$ instances with $dc -i$ elements. Therefore, the running time is
$\sum_{i=0}^{dc} {dc \choose i} 2^{dc-i}n^{O(1)}  = 3^{dc} n^{O(1)}$.
\end{proof}

\section{Efficient Algorithm II: Modular-Width}
\begin{theorem}
    Let $\mathcal{I}=(G,k,\gamma)$ be an instance of \mrbcShort{}. We can decide if $\mathcal{I}$ is a yes-instance in time $2^{mw}(n)^{O(1)}$, where $n=|V(G)|$ and $mw$ is the modular-width of $G$.
\end{theorem}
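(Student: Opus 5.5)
Throughout, write $D$ for the red dominating set and $S\subseteq B$ for the deleted blue vertices. The plan is a bottom-up dynamic program over a minimum-width modular decomposition $T_G$, computed in $O(n+m)$ time. The key observation is that for any module $V_b$ (the vertex set of $G_b$, viewed inside the final graph), the outside world interacts with $V_b$ only through one bit: whether $D\cap V_b\neq\emptyset$. If $D$ contains any vertex of $V_b$, then every vertex outside $V_b$ adjacent to the module is dominated. Symmetrically, a vertex of $V_b$ is dominated from outside exactly when some neighbouring module of the parent contains a vertex of $D$.

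For each node $b$ I will store a table $T_b[d,s,\sigma,\tau]$. Here $d\le n$ is the number of vertices of $D\cap V_b$ and $\sigma\in\{0,1\}$ records whether $D\cap V_b\neq\emptyset$. The bit $\tau\in\{0,1\}$ records whether $V_b$ is assumed to be dominated from outside. If $\tau=1$, all blue vertices of $V_b$ count as dominated for free, so no deletions are needed. The entry $T_b[d,s,\sigma,\tau]$ is the minimum $s$-value, i.e.\ the fewest deleted blue vertices in $V_b$, over choices $D_b\subseteq R\cap V_b$ with $|D_b|=d$ and the given $\sigma$. Every blue vertex of $V_b$ must be dominated inside $G_b$ by $D_b$, or deleted, or covered by the outside assumption. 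I will drop $s$ from the index and store it as the value; the table then has $O(n)$ entries.

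At a leaf, the entries are immediate. Either the vertex $v$ is put in $D$ (requires $v\in R$; cost $0$, $d=1$, $\sigma=1$), or it is not. In the latter case the cost is $0$ if $\tau=1$ or $v\notin B$, and $1$ otherwise.

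At an internal node $b$ with children $1,\dots,c$ ($c\le mw$) and quotient graph $H_b$, I enumerate all $2^c$ subsets $A\subseteq[c]$ of children with $\sigma_i=1$. For a fixed $A$ and outer bit $\tau$, child $i$ receives $\tau_i=1$ iff $\tau=1$ or $i$ has an $H_b$-neighbour in $A$. Child $i$ must use $\sigma_i=[i\in A]$. I then combine the children by a knapsack-style min-plus convolution over $d$: add children one at a time, summing $d$ and summing costs. This gives $T_b[d,\cdot,[A\neq\emptyset],\tau]$ in polynomial time per $A$. Correctness rests on the module observation: domination of a vertex in child $i$ comes either from inside $G_i$, or from a child adjacent in $H_b$ that is in $A$, or from outside $V_b$. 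The last source is exactly $\tau$, since all of $V_b$ sees the same outside vertices. At the root, with $\tau=0$, I can compute $\gamma_{rb}(G)$ as the least $d$ with $T_{root}[d,0]$ finite. I accept iff some $d\le\gamma_{rb}(G)-\alpha$ has $T_{root}[d,0]\le k$.

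The main obstacle is making this decoupling sound. Two issues need care. First, deleting a red-and-blue vertex of $S$ must not be needed for domination; this holds since $S\subseteq B$ is deleted and a deleted vertex is never put into $D$. Second, the $\tau$-assumption must be justified per child, and the bit $\sigma$ must suffice to summarise a child to its siblings. I will argue this by exchange: given any optimal global $(D,S)$, restricting it to each $V_i$ yields a feasible entry with the induced $(\sigma_i,\tau_i)$, and conversely the pieces glue. The running time is $O(n)$ nodes times $2^{mw}$ subsets times polynomial-time convolution, i.e.\ $2^{mw}n^{O(1)}$.
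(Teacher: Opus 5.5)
Your proposal is correct and follows essentially the same route as the paper. Both run a DP over the modular decomposition that guesses which children meet $D$ ($2^{mw}$ choices), treats every child adjacent in $H_b$ to a guessed child as dominated for free, and merges the remaining children by a min-plus convolution over the size of $D$; your extra bit $\tau$ just makes explicit the outside-domination case that the paper handles at the parent via its split of $X$ into $X_1$ and $X_2$. One small slip: $\gamma_{rb}(G)$ is the least $d$ with $T_{root}[d,0]=0$, not merely finite, since every $d\le |R|$ gives a finite value (undominated blue vertices can simply be deleted); in any case the instance $(G,k,\gamma)$ already supplies the target $\gamma$.
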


\iflong
\begin{proof}

We will perform dynamic programming on the modular decomposition $T_G$ of $G$. Recall that $G=(V,E)$ is a red-blue graph, and that $R$ and $B$ are the red and blue vertices, respectively, of $G$. 

Let $b$ be a node of $T_G$ and $G_b=(V_b,E_b)$ be the graph represented by $b$. We will use $R_b$ and $B_b$ to denote the red and blue vertices, respectively, of $G_b$. We say that $G_b$ is \textit{dominable} by a set $D\subseteq R_b$ if $N[v]\cap D\neq \emptyset$, for all $v \in B_b$.  
For each node $b$, we define a function $f_b: [n]\cup \{0\} \rightarrow \mathcal{P} (B)$. This function takes as input an integer $\gamma$ and returns a set $S\subseteq V_b$ of minimum size such that $G_b\setminus S$ is dominable by a set $D$ of order at most $\gamma$.

Let $S=f_r(\gamma)$, where $r$ is the root of $T_G$. It is clear that $\mathcal{I}$ is a yes-instance of \mrbcShort{} if and only if $|S|\leq k$. We will now show how we can compute the function $f_b$ for any node $b$ of $T_G$. 

We first deal with the leaf nodes of $T_G$. Let $\ell$ be a leaf node and $V_\ell= \{v\}$. 
If $v\notin B$ then we set $f_\ell(\gamma)=\emptyset$ for all $\gamma\in [n] \cup \{0\} $. If $v \in B$ and $v\notin R$ then we need to delete $v$ as it cannot be dominated; thus we set $f_\ell(\gamma)=\{v\}$ for all $\gamma\in [n] \cup \{0\} $. Finally, if $B=R=\{v\}$, then we set $f_\ell(0)=\{v\}$ and $f_\ell(\gamma)=\emptyset$, for all $\gamma\in [n]$. 

Next, we consider the non-leaf nodes of $T_G$.
Recall that any non-leaf node $b$ is related to a graph $H_b$ with vertex set $V(H_b) = \{w_1,\ldots,w_c\}$ and we construct $G_b$ by using $H_b$ together with the $c$ graphs $G_{b_i}$, $i \in [c]$, each one represented by a child $b_i$ of $b$. Also, recall that $c \le mw$. 
We also assume that we have already computed the functions $f_{b_1},\ldots,f_{b_c}$. To simplify the notation, we denote by $f_i$ the function $f_{b_i}$, by $G_i$ the graph $G_{b_i}$, and by $V_i$ the set $V_{b_i}$, for all $i \in [c]$. We also use $B_i$ for $B\cap V_i$ and $R_i$ for $R\cap V_i$. 

We start by guessing which sets among the $V_i$'s will contain vertices of $D$. Let $X=\{i\in[c]|D\cap V_i\neq \emptyset\}$. Such a guess is \textit{valid} if $R_i \neq \emptyset$ for every $i\in X$. There are at most $2^c\le 2^{mw}$ valid guesses. For any valid guess $X$, we will compute a function $f^X_b$ which takes $\gamma$ as its input, and returns a set $S$ (of minimum order) such that:
\begin{itemize}
    \item there exists a set $D\subseteq R_b \setminus S$ of order $\gamma$,
    \item $D\cap V_i \neq \emptyset$, for all $i \in X$ and 
    \item $G_b\setminus S$ is dominable by $D$.
\end{itemize}
Finally, we set $f^{X}_b(\gamma) = B$ if $X$ is not valid or $0\leq \gamma<X$. 
Once we have computed $f^X_b$, for all $X\subseteq [c]$ and $\gamma \in [n]$, then we compute $f_b$ by setting $f_b(\gamma) =f^{X^*}_b(\gamma)$, where  $X^*$ is such that $|f^{X^*}_b(\gamma)| \le \min_{X\subseteq [c]}\{ | f^{X}_b(\gamma) | \}$.

Hereafter, we will show how to compute $f^X_b$ for a fixed valid guess $X\subseteq [c]$. 

First, consider any set $B_j$ such that $w_iw_j\in E(H_b)$ for some $i \in X$. Notice that any vertex $v \in R_i$ dominates all vertices in $B_j$. Therefore, we can ignore all such vertices as we can guarantee that they will be dominated by selecting at least one vertex in $R_i$, for each $i \in X$. 

Now, consider any set $B_j$ such that $w_iw_j\notin E(H)$ for all $i \in X$, $j\neq i$. These vertices can never be dominated. Therefore, we need to include them in the set $S$ that is the output of the function $f^X_b(\gamma)$, for all $\gamma\in [n] \cup \{0\}$. Let $S_X$ be the set of all such vertices. 

In order to compute the rest of the vertices we need to include in $S$, it suffices to decide the exact number of vertices we include in $D$ from each one of the $R_i$'s, for all $i \in X$. 
To do so, we first partition $X$ into two sets $X_1 = X \cap \{i \mid j \in X \textrm{ and } w_iw_j\in E(H_b)\}$ and $X_2 =X \setminus X_1$. 

For any $i \in X_1$, it is enough to include one vertex of $R_i$ in $D$. In fact, by the previous arguments, the vertices of $B_i$, for any $i \in X_1$, will be dominated regardless of which vertex of $R_i$ we choose to include in $D$. Additionally, all vertices of $R_i$ have the same neighborhood outside of $V_i$. Thus, it suffices, and is safe to select (arbitrarily) any one vertex of each $R_i$ and include it in $D$. 

It remains to decide how many vertices of each of $R_i$ will be included in $D$ for each $i \in X_2$. This choice is important in order to minimize the vertices of $B_i$ that will be included in $S$. 
In particular, we can select up to $\gamma - |X_1|$ vertices (as $D$ includes one vertex from each $R_i$, $i \in X_1$). 
Also, for any $i \in X_2$, a vertex $v \in B_i$ will either be dominated by a vertex in $R_i$ or deleted. 

Notice that $G[\bigcup_{i \in X_2}V_i]$ is the disjoin union of the graphs $G_i$, $i \in X_2$ (by the definition of $X_2$). Additionally, we have the restriction that the set $S$ that we are computing has at most $\gamma - |X_1|$ vertices and that it intersects $V_i$, for all $i \in X_2$. 

To compute the remaining vertices that will be included in $S$ we will compute a set of functions $f^X_{[p]}: [|R|]\setminus \{0,\ldots,p-1\} \rightarrow \mathcal{P}(B)$, for every $p \in [|X_2|]$. These functions take $\gamma$ as input, and return a set $S \subseteq \bigcup_{q \in [p]} B_{i_q}$ of minimum order such that $G[ \bigcup_{q \in [p]} V_{i_q} \setminus S ]$ is dominable by a set $D \subseteq \bigcup_{q \in [p]} R_{i_q}$ of order $\gamma$ that intersects all sets $ V_{i_q}$, $q \in [p]$.

Before we continue with the computation of these functions, we want to mention that this is the last step needed in order to compute $f^X_b$ (and therefore $f_b$). Indeed, if we compute $f^X_{[|X_2|]}$, then we have also computed $f^X_b(\gamma)$ since, by the previous arguments, $f^X_b(\gamma) = S_X \cup f^X_{[|X_2|]} (\gamma - |X_1|)$. 

The computation of the functions $f^X_{[p]}$, for every $p\in [|X_2|]$, is done recursively.
For $p=1$, we already have computed the values we need as $f^X_{[1]}(\gamma)= f_{i_1}(\gamma)$, for all $\gamma \ge 1$.

Assume that we have a computed the function $f^X_{[p]}$ for a $p \in [|X_2|-1]$. We will show how we can use $f^X_{[p]}$ together with $f_{i_{p+1}}$ in order to compute the $f^X_{[p+1]}$. 
For any $\gamma\ge [p+1]$, we have that
$
f^X_{[p+1]} (\gamma) = \min_{\gamma_1\ge p, \gamma_2\ge 1 \textrm{ and } \gamma_1+\gamma_2 = \gamma} \{f^X_{[p]}(\gamma_1) + f_{i_{p+1}}(\gamma_2) \}
$. 
This means that we can compute all values of $f^X_{[p+1]}$ in polynomial time and, therefore, we can also compute $f^X_{[|X_2|]}$ in polynomial time. This completes the computation of $f_b$ for all nodes $b$ of $T_G$. 

It remains to show that the algorithm runs in the claimed running time. 
For any leaf node $\ell$, we can compute $f_{\ell}$ in constant time.
Let $b$ be a non-leaf node of $T_G$. In order to compute $f_b$, we first guessed whether $D$ intersects with the vertex sets $V(G_1), \ldots, V(G_n)$, where $G_1, \ldots, G_n$ are the graphs represented by the children nodes of $b$. We have at most $2^n \le 2^{mw}$ such guesses. For each such guess, we computed (in polynomial time) a function $f^X_b$ (where $X \subseteq [c]$) that returns the set $S$. After computing $f^X_b$, for all $X\subseteq [c]$, it takes at most $2^{mw} n^{O(1)}$ time to compute $f_b$, since $f_b(\gamma)$ returns a smallest set between all the $f^X_b(\gamma)$'s, $X \subseteq [c]$. 
Finally, since we have $O(n)$ nodes in $T_G$, we need to compute at most $O(n)$ such functions, and this will take at most $2^{mw}n^{O(1)}$ time in total.
\end{proof}
\fi
\ifshort
\begin{sketch}
We will perform dynamic programming on the modular decomposition $T_G$ of $G$. Recall that $G=(V,E)$ is a red-blue graph, and that $R$ and $B$ are the red and blue vertices, respectively, of $G$. 

Let $b$ be a node of $T_G$ and $G_b=(V_b,E_b)$ be the graph represented by $b$. We will use $R_b$ and $B_b$ to denote the red and blue vertices, respectively, of $G_b$. We say that $G_b$ is \textit{dominable} by a set $D\subseteq R_b$ if $N[v]\cap D\neq \emptyset$, for all $v \in B_b$.  
For each node $b$, we define a function $f_b: [n]\cup \{0\} \rightarrow \mathcal{P} (B)$. This function takes as input an integer $\gamma$ and returns a set $S\subseteq V_b$ of minimum size such that $G_b\setminus S$ is dominable by a set $D$ of order at most $\gamma$.

Let $S=f_r(\gamma)$, where $r$ is the root of $T_G$. It is clear that $\mathcal{I}$ is a yes-instance of \mrbcShort{} if and only if $|S|\leq k$. We will now show how we can compute the function $f_b$ for any node $b$ of $T_G$.

We first deal with the leaf nodes of $T_G$. Let $\ell$ be a leaf node and $V_\ell= \{v\}$. 
If $v\notin B$ then we set $f_\ell(\gamma)=\emptyset$ for all $\gamma\in [n] \cup \{0\} $. If $v \in B$ and $v\notin R$ then we need to delete $v$ as it cannot be dominated; thus we set $f_\ell(\gamma)=\{v\}$ for all $\gamma\in [n] \cup \{0\} $. Finally, if $B=R=\{v\}$, then we set $f_\ell(0)=\{v\}$ and $f_\ell(\gamma)=\emptyset$, for all $\gamma\in [n]$. 

Next, we consider the non-leaf nodes of $T_G$.
Recall that any non-leaf node $b$ is related to a graph $H_b$ with vertex set $V(H_b) = \{w_1,\ldots,w_c\}$ and we construct $G_b$ by using $H_b$ together with the $c$ graphs $G_{b_i}$, $i \in [c]$, each one represented by a child $b_i$ of $b$. Also, recall that $c \le mw$. 
We also assume that we have already computed the functions $f_{b_1},\ldots,f_{b_c}$. To simplify the notation, we denote by $f_i$ the function $f_{b_i}$, by $G_i$ the graph $G_{b_i}$, and by $V_i$ the set $V_{b_i}$, for all $i \in [c]$. We also use $B_i$ for $B\cap V_i$ and $R_i$ for $R\cap V_i$. 

We start by guessing which sets among the $V_i$'s will contain vertices of $D$. Let $X=\{i\in[c]|D\cap V_i\neq \emptyset\}$. Such a guess is \textit{valid} if $R_i \neq \emptyset$ for every $i\in X$. There are at most $2^c\le 2^{mw}$ valid guesses. For any valid guess $X$, we will compute a function $f^X_b$ which takes $\gamma$ as its input, and returns a set $S$ (of minimum order) such that:
\begin{itemize}
    \item there exists a set $D\subseteq R_b \setminus S$ of order $\gamma$,
    \item $D\cap V_i \neq \emptyset$, for all $i \in X$ and 
    \item $G_b\setminus S$ is dominable by $D$.
\end{itemize}
Finally, we set $f^{X}_b(\gamma) = B$ if $X$ is not valid or $0\leq \gamma<X$. 
Once we have computed $f^X_b$, for all $X\subseteq [c]$ and $\gamma \in [n]$, then we compute $f_b$ by setting $f_b(\gamma) =f^{X^*}_b(\gamma)$, where  $X^*$ is such that $|f^{X^*}_b(\gamma)| \le \min_{X\subseteq [c]}\{ | f^{X}_b(\gamma) | \}$.

Hereafter, we will show how to compute $f^X_b$ for a fixed valid guess $X\subseteq [c]$. 

First, consider any set $B_j$ such that $w_iw_j\in E(H_b)$ for some $i \in X$. Notice that any vertex $v \in R_i$ dominates all vertices in $B_j$. Therefore, we can ignore all such vertices as we can guarantee that they will be dominated by selecting at least one vertex in $R_i$, for each $i \in X$. 

Now, consider any set $B_j$ such that $w_iw_j\notin E(H)$ for all $i \in X$, $j\neq i$. These vertices can never be dominated. Therefore, we need to include them in the set $S$ that is the output of the function $f^X_b(\gamma)$, for all $\gamma\in [n] \cup \{0\}$. Let $S_X$ be the set of all such vertices. 

In order to compute the rest of the vertices we need to include in $S$, it suffices to decide the exact number of vertices we include in $D$ from each one of the $R_i$'s, for all $i \in X$. 
To do so, we first partition $X$ into two sets $X_1 = X \cap \{i \mid j \in X \textrm{ and } w_iw_j\in E(H_b)\}$ and $X_2 =X \setminus X_1$. 

For any $i \in X_1$, it is enough to include one vertex of $R_i$ in $D$. In fact, by the previous arguments, the vertices of $B_i$, for any $i \in X_1$, will be dominated regardless of which vertex of $R_i$ we choose to include in $D$. Additionally, all vertices of $R_i$ have the same neighborhood outside of $V_i$. Thus, it suffices, and is safe to select (arbitrarily) any one vertex of each $R_i$ and include it in $D$. 

It remains to decide how many vertices of each of $R_i$ will be included in $D$ for each $i \in X_2$. This choice is important in order to minimize the vertices of $B_i$ that will be included in $S$. 
In particular, we can select up to $\gamma - |X_1|$ vertices (as $D$ includes one vertex from each $R_i$, $i \in X_1$). 
Also, for any $i \in X_2$, a vertex $v \in B_i$ will either be dominated by a vertex in $R_i$ or deleted. 

Notice that $G[\bigcup_{i \in X_2}V_i]$ is the disjoin union of the graphs $G_i$, $i \in X_2$ (by the definition of $X_2$). Additionally, we have the restriction that the set $S$ that we are computing has at most $\gamma - |X_1|$ vertices and that it intersects $V_i$, for all $i \in X_2$. 

To compute the remaining vertices that will be included in $S$ we will compute a set of functions $f^X_{[p]}: [|R|]\setminus \{0,\ldots,p-1\} \rightarrow \mathcal{P}(B)$, for every $p \in [|X_2|]$. These functions take $\gamma$ as input, and return a set $S \subseteq \bigcup_{q \in [p]} B_{i_q}$ of minimum order such that $G[ \bigcup_{q \in [p]} V_{i_q} \setminus S ]$ is dominable by a set $D \subseteq \bigcup_{q \in [p]} R_{i_q}$ of order $\gamma$ that intersects all sets $ V_{i_q}$, $q \in [p]$. 

The computation of the functions $f^X_{[p]}$, for every $p\in [|X_2|]$, is done by recursion on $p$, and follows similar arguments as the ones exploited so far.
Finally, it takes at most $2^{mw} n^{O(1)}$ time to compute $f_b$.
Since we have $O(n)$ nodes in $T_G$, we need to compute at most $O(n)$ such functions, and this will take at most $2^{mw}n^{O(1)}$ time in total.
\end{sketch}
\fi


\section{Efficient Algorithm III: Clique-Width}
\begin{theorem}\label{thm:cw}
Given an instance $\mathcal{I}=(G,k,\gamma)$ of \mrbcShort{} and a clique-width expression $T_G$ of $G$ using at most $cw$ labels, we can decide if $\mathcal{I}$ is a yes-instance in time $4^{cw}n^{O(1)}$. 
\end{theorem}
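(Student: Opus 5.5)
We run a bottom-up dynamic program over the clique-width expression $T_G$, following the standard $4^{cw}$ approach for \textsc{Dominating Set} on clique-width. For a node $t$ of $T_G$, let $G_t$ be the labeled graph it builds, with label classes $L_1,\dots,L_{cw}$. A partial solution at $t$ is a pair $(D,S)$ with $D\subseteq R\cap V(G_t)$ and $S\subseteq B\cap V(G_t)$. We describe it by a \emph{signature} that assigns one of four states to each label $\ell$:
\begin{itemize}
\item whether $D\cap L_\ell\neq\emptyset$ (two choices);
\item whether every vertex of $(B\cap L_\ell)\setminus S$ is already dominated by $D$ inside $G_t$ (two choices).
\end{itemize}
Under the second bit, an undominated blue vertex outside $S$ is allowed to wait for a future dominator, but all such vertices of a label must be dominated at once later. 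For each signature $\sigma\in(\{0,1\}^2)^{cw}$ and each $g\in\{0,\dots,n\}$, the table entry $c_t[\sigma,g]$ stores the minimum $|S|$ over partial solutions with $|D|=g$ and signature $\sigma$. The table has $4^{cw}(n+1)$ entries.

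The operations are handled as follows.
\begin{itemize}
\item \textbf{Introduce.} For a vertex $v$ with label $\ell$, list the feasible cases: $v\in D$ (requires $v\in R$, and then $v$ dominates itself), $v\in S$ (requires $v\in B$), or neither. If $v$ is blue, is neither in $D$ nor in $S$, and is not red-in-$D$, it is recorded as undominated.
\item \textbf{Rename $\ell\to\ell'$.} Merge the two labels: combine the $D$-bits by OR and the dominated bits by AND, then take minima over signatures that collide.
\item \textbf{Add edges $\eta_{\ell,\ell'}$.} If the $D$-bit of $\ell'$ is $1$, set the dominated bit of $\ell$ to $1$, and symmetrically for $\ell'$. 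This is exact, because every vertex of $L_\ell$ becomes adjacent to every vertex of $L_{\ell'}$, and the edges added only within these label classes cannot create any other effect.
\item \textbf{Union.} Combine the two tables by taking the labelwise OR on $D$-bits and AND on dominated bits, adding costs and adding $g$ values.
\end{itemize}
At the root we accept iff some $\sigma$ with all dominated bits equal to $1$ and some $g\le\gamma$ has $c_r[\sigma,g]\le k$. The value $\gamma$ is computed as $\gamma_{rb}(G)-\alpha$, and $\gamma_{rb}(G)$ is obtained from the same DP with $S$ forced empty. Correctness follows by the usual induction: each table entry equals the optimum over partial solutions with that signature. For this we need the fact that future edges added to a label affect all of its vertices uniformly, so tracking only "all blue vertices of this label that are outside $S$ are dominated" loses no information.

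The main obstacle is the running time of the union step. Combining tables naively costs $16^{cw}$ pairs. To get $4^{cw}$ we apply the standard trick from \textsc{Dominating Set} on clique-width, a monotone relaxation followed by a fast subset convolution. We relax the dominated bit to the three-valued state "dominated / not dominated / don't care". We relax the $D$-bit to "required nonempty" versus "don't care", so entries become monotone in the relaxed order. With this relaxation, the OR/AND combination becomes a coordinatewise covering product. We compute it by a zeta transform over each label coordinate, a pointwise min-plus product in $g$ (polynomial), and a Möbius-type inversion back. The delicate part is the min-plus versus sum semantics. I would handle it as in the literature: switch to counting with bounded weights, i.e. count solutions of each exact $(|D|,|S|)$ pair, which is a ring computation with polynomially bounded indices. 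After this, each union costs $4^{cw}n^{O(1)}$. Since $T_G$ has $n^{O(1)}$ nodes, the total time is $4^{cw}n^{O(1)}$.
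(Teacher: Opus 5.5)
Your proof is correct in its final operative form, and it reaches $4^{cw}$ by a different route from the paper.

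\textbf{Your route.} You keep an exact four-state signature per label: the $D$-bit and the ``all remaining blue vertices dominated'' bit. Introduce, rename and edge-introduce are then deterministic state updates. Correctness needs only the observation that later operations treat a label class uniformly. The union becomes a covering product over $2cw$ Boolean coordinates; after complementing the dominated bit, both coordinates combine by OR. You evaluate it in a counting ring indexed by exact $(|D|,|S|)$, using one zeta transform, pointwise polynomial products and one M\"obius inversion.

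\textbf{The paper's route.} The paper builds a monotone relaxation into the state itself:
\begin{itemize}
\item $a_i=1$ means ``label-$i$ blue vertices must be dominated or deleted'' and $a_i=0$ means ``no requirement'', justified via Property~\ref{prop:dom-future};
\item $\beta$ is your exact $D$-bit;
\item $m$ is an upper bound on $|D|$;
\item the table stores the minimum $|S|$.
\end{itemize}
Because $\alpha$ is monotone, both children of a union node are queried with the same $\alpha$. The union is therefore a min-plus covering product over the $cw$ coordinates of $\beta$ alone, repeated for each of the $2^{cw}$ vectors $\alpha$ and each split $m_1+m_2=m$. This again gives $4^{cw}$.

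\textbf{Comparison.} The paper uses a smaller transform and stays in min-plus, relying on the values being bounded by $n$. It pays with a case analysis at edge-introduce nodes that guesses which label classes the new edges dominate. Your exact-state version needs neither that guessing nor the minimality argument behind Property~\ref{prop:dom-future}. It also makes the min-plus versus counting issue explicit, at the cost of a transform over twice as many coordinates.

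\textbf{A correction.} Your ``relaxation'' paragraph does not match the computation you then perform and should be removed.
\begin{itemize}
\item Taken literally, a three-valued dominated state gives six states per label and a $6^{cw}$ table.
\item ``Required nonempty / don't care'' is not a basis in which an OR of $D$-bits becomes a pointwise product. In counting terms, for a single label the union's ``required nonempty'' entry is $f_{\mathrm{dc}}g_{\mathrm{dc}}-(f_{\mathrm{dc}}-f_{\mathrm{req}})(g_{\mathrm{dc}}-g_{\mathrm{req}})$, not $f_{\mathrm{req}}g_{\mathrm{req}}$.
\end{itemize}
The zeta transform over the $2cw$ Boolean coordinates supplies the correct relaxed basis by itself. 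That basis is ``$D\cap L_\ell=\emptyset$ / don't care'' and ``all dominated / don't care'', four states per label. Once you work with counts, no separate monotone relaxation is needed.
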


\iflong
\begin{proof}
We will perform dynamic programming over the clique-width expression $T_G$.
Let $G_t$ denote the graph constructed by the expression defined by the subtree rooted at node $t$, and let $V_t$ denote the vertex set of $G_t$. Our algorithm will compute a set $S\subseteq B$ of at most $k$ vertices such that there exists a set $D\subseteq R$ of at most $\gamma$ vertices that dominates all the blue vertices of $G\setminus S$, if such sets do indeed exist.  
For each $G_t$ associated with node $t$ of the given clique-width expression, the algorithm maintains information about:
\begin{itemize}
\item which red vertices are included in set $D$,
\item which blue vertices are included in set $S$,
\item which blue vertices already have neighbors in $D$, and
\item which blue vertices do not yet have neighbors in $D$.
\end{itemize}

Before presenting the actual information we maintain, we prove some properties.
\begin{property}\label{prop:dom-future}
Let $S$ be a minimal set such that $\gamma_{RB}(G\setminus S) \le \gamma$, and let $D$ be a minimum dominating set of $G\setminus S$. For any node $t$ of $T_G$, let $G_t$ be the graph represented by the clique-width expression of the subtree rooted at $t$. If there exists a vertex $v\in V_t\setminus S$ such that $v$ is not dominated by $D\cap V_t$, then we can assume that for all $w\in V_t$ with $\ell_{G_t} (v) = \ell_{G_t} (w)$, the following holds:
\begin{enumerate}
\item $w \notin S$, and
\item there exists a vertex $u \in N(w)\cap D$ (the same for all $w$, though not necessarily unique) and the edge $wu$ has not yet been introduced .
\end{enumerate}
\end{property}
\begin{proof}
Since $v \notin S$, it must eventually be dominated by some vertex $u \in D$.
Because this is not true in $G_t$, the edge $vu$ must be introduced at a node $b$ representing O.3, as this is the only way to add new edges to our graph. Moreover, $b$ must be an ancestor of $t$. We can conclude that at $b$, we add edges between every vertex labeled $\ell_{G_{b}}(v)=i$ and every vertex labeled $\ell_{G_{b}}(u)=j$.
    Second, since $\ell_{G_t} (v) = \ell_{G_t} (w)$, we have $\ell_{G_{t'}} (v) = \ell_{G_{t'}} (w)$ for any ancestor node $t'$ of $t$. This holds because no operation partially changes a label; thus, if two vertices share a label at any point, this remains true until the end of the construction. Therefore, at node $b$, we also introduce the edge $wu$. 
    Note that for any such $w$, we can use the same vertex $u$ to dominate them, as its definition does not depend on $w$. This completes the proof of the second statement. 

    For the first statement, suppose $w\in S$ and let $S'=S\setminus \{w\}$. Then $D$ is a red-blue dominating set of $G\setminus S'$ with $|D|\le \gamma$, so $\gamma_{RB}(G\setminus S')\le \gamma$. This contradicts the minimality of $S$, as $|S'|<|S|$.
\end{proof}

The previous property allows us to assume that for any graph $G_t$ associated with node $t$, the blue vertices of label $i$ in $G_t$ satisfy: 
\begin{itemize}
    \item either we decide which are dominated and which are included in $S$, 
    \item or all vertices labeled $i$ at this point will be dominated by one vertex $u$ (not necessarily in $V_t$), with edges between $u$ and these vertices being introduced at an ancestor node of $t$.
\end{itemize}

Before proceeding, we explain the intuition behind the algorithm. 
For any graph $G_t$ where $t \in T_G$, we solve the following problem. 

Given a labeled red-blue graph $G$ with $cw$ labels, two binary vectors $\alpha=\{a_1,\ldots,a_{cw}\}$, $\beta=\{b_1,\ldots,b_{cw}\}$, and an integer $m \in [n]$, determine the smallest integer $s$ such that: 
\begin{itemize}
    \item There exist two sets $S \subseteq B$ and $D \subseteq R\setminus S$ where:
    \begin{itemize}
        \item $|D|\le m$ and $|S| = s$, 
        \item if $a_i=1$, then for any $u \in B \cap \{ v \mid \ell_{G_t}(v)=i \}$, either $u \in S$ or $N[u] \cap D \neq \emptyset$, 
        \item if $b_i = 0$, then for all $u \in R \cap \{ v \mid \ell_{G_t}(v)=i \}$ we have that $u \notin D$,
        \item if $b_i = 1$, then there exists a $u \in R \cap \{ v \mid \ell_{G_t}(v)=i \}$ such that $u \in D$. 
    \end{itemize}    
\end{itemize}

Before explaining how we solve this, let us interpret the restrictions imposed by $\alpha$ and $\beta$. The $i^{th}$ position of $\alpha$, $a_i$, relates to blue vertices labeled $i$. Specifically, $a_i=0$ means blue vertices with label $i$ need not be dominated by $D$, while $a_i=1$ means they must either be dominated or deleted (i.e., included in $S$).
For $\beta$, $b_i=0$ indicates that no red vertex labeled $i$ is included in $D$, while $b_i=1$ indicates that at least one such vertex must be in $D$. The integer $m$ is the maximum allowed size of $D$, and the function returns the size of the smallest set $S$ that must be deleted for such a $D$ to exist.

We compute a function $f_t$ that takes vectors $\alpha$, $\beta$, and an integer $m$, returning the value of $s$ for the graph $G_t$ associated with node $t$ of $T_G$. 

Since $f_t$ provides answers for all possible inputs $\alpha$, $\beta$, $m$, we can use $f_r$ (where $r$ is the root of $T_G$) to determine whether $\mathcal{I}$ is a yes-instance.

We compute $f_t$ for all nodes $t$ of $T_G$ using dynamic programming. When considering a node $t$, we subscript the hypothetical sets $S$ and $D$ with $t$ to distinguish between nodes. If no pair $(D_t,S_t)$ satisfies the restrictions of $f_t(\alpha,\beta,m)$, the function returns $+\infty$.

The construction is inductive, starting with the vertex-introduce nodes (the leaves of $T_G$), which form the base case.

\paragraph{Vertex-Introduce Nodes.} Consider a vertex-introduce node $t$. Since these nodes are leaves, $V_t$ contains only the newly introduced vertex, say $v$, and the only label is $\ell_{G_t}(v)$. We consider several cases. 

\textbf{Case 1: $v\in R\setminus B$.} Here, $\gamma_{RB}(G_t)= 0$, so no vertices need to be included in $S$. The only reason not to return $0$ is if $\beta$ requires including vertices in $D_t$ that do not exist. Let $\beta_1 = \textbf{0}$ and $\beta_2 \in \{0,1\}^{cw}$ be the vector with all positions $0$ except the $\ell_{G_t}(v)$ position. We set:
\begin{itemize}
        \item $f_t(\alpha,\beta_1,m) = 0$, for all vectors $\alpha$ (as there is no blue vertex) and $m \in [n]\cup \{0\}$, 
        \item $f_t(\alpha,\beta_2,m)= 0$, for all vectors $\alpha$ (as there is no blue vertex) and $m \in [n]$, and 
        \item $f_t(\alpha,\beta,m)= +\infty$, for all the other inputs.
    \end{itemize}
    Indeed, $f_t(\alpha,\beta_2,0)= +\infty$ since the $\beta_2$ indicates that we have at least one vertex in $D_t$ while the third argument indicates that $|D_t|=0$ (which is invalid). Also, for any $\beta \notin \{\beta_1,\beta_2\}$ we need to include in $D_t$ vertices that do not exist.

\textbf{Case 2: $v\in B\setminus R$.} Here, $v \in B$. Since there are no red vertices, $f(\alpha,\beta,m)= +\infty$ for any $\alpha \in \{0,1\}^{cw}$, $\beta \neq \textbf{0}$, and $m \ge 0$. 
If $\beta = \textbf{0}$, $D_t$ must be empty. Let $i = \ell_{G_t}(v)$. Then:
\begin{itemize}
    \item $f(\alpha,\textbf{0},m)= 0$ for all $\alpha$ with $a_i=0$, 
    \item $f(\alpha,\textbf{0},m)= 1$ otherwise.
\end{itemize}
If $a_i = 0$, vertex $v$ need not be dominated, so we do not delete it. If $a_i = 1$, $v$ must be either dominated or deleted. Since $D_t$ is empty, $v$ must be included in $S_t$, so we return $1$.

\textbf{Case 3: $v \in B\cap R$.} Here, $\beta$ can only take two valid values:
\begin{itemize}
    \item $\beta=\beta_1$ where $b_i=1$ for $i=\ell_{G_t}(v)$ and $b_j=0$ otherwise (indicating $v \in D_t$),
    \item $\beta=\beta_2=\textbf{0}$ (indicating $v \notin D_t$).
\end{itemize}
For $\beta \notin\{\beta_1,\beta_2\}$, we set $f_t(\alpha,\beta,m) = +\infty$ because $\beta$ cannot be satisfied. 
For $\beta=\beta_1$, $v$ dominates itself, so $f_t(\alpha, \beta_1, m) = 0$ for $m\ge 1$, and $f_t(\alpha, \beta_1, m)= +\infty$ for $m=0$ (no valid $D_t$).
For $\beta=\beta_2$, we consider $\alpha$:
\begin{itemize}
    \item If $a_{\ell_{G_t}(v)}=1$, then $f_t(\alpha,\beta_2,m) = 1$ for all $m\ge 0$, as $v$ cannot be dominated in $G_t$.
    \item If $a_{\ell_{G_t}(v)}=0$, then $f_t(\alpha,\beta_2,m) = 0$ for all $m\ge 0$, as $v$ need not be dominated or deleted.
\end{itemize}
This completes the vertex-introduce nodes. 

\paragraph{Rename Nodes.} Consider a rename node $t$ with child node $t'$. Node $t$ takes $G_{t'}$ as input and renames label $i$ to $j$. We assume $f_{t'}$ has been computed and use it to compute $f_{t}$. Note that $G_t$ and $G_{t'}$ are the same graphs with different labels. 

To compute $f_t(\alpha,\beta,m)$, note that any $\beta= \{b_1,\ldots,b_{cw}\}$ with $b_i\neq 0$ is invalid because no vertices labeled $i$ remain after renaming. Thus, $f_t(\alpha,\beta,m) = +\infty$ for any such $\beta$. 

Now suppose $\beta = \{b_1,\ldots,b_{cw}\}$ with $b_i=0$.
For $(p,q)\in \{0,1\}^2$, define $\beta^{p,q}=\{b^{p,q}_1,\ldots,b^{p,q}_{cw}\}$ where $b^{p,q}_i= p$, $b^{p,q}_j= q$, and $b^{p,q}_k = b_k$ for $k \in [cw]\setminus \{i,j\}$. Similarly, define $\alpha^{p,q} = \{a^{p,q}_1,\ldots,a^{p,q}_{cw}\}$ where $a^{p,q}_i= p$, $a^{p,q}_j= q$, and $a^{p,q}_k = a_k$ for $k \in [cw]\setminus \{i,j\}$.
We consider two cases based on $a_j$.

\medskip 

\noindent\textbf{Case 1 ($a_j = 0$):} 
Here, blue vertices labeled $j$ in $G_t$ need not be dominated or deleted. Thus, we consider $\alpha^{0,0}$ for $G_{t'}$ (no need to dominate or delete vertices labeled $i$ or $j$). We further distinguish based on $\beta_j$.

\textbf{Case 1.a ($b_j = 0$):} Since $D_t$ cannot include vertices labeled $j$ in $G_t$, we consider $\beta^{0,0}$ for $G_{t'}$ (no vertices labeled $i$ or $j$ in $D_{t'}$). Thus, $f_t(\alpha,\beta,m) = f_{t'}(\alpha^{0,0},\beta^{0,0},m)$.

\textbf{Case 1.b ($b_j = 1$):} This occurs in one of three subcases:
\begin{enumerate}
    \item there exist $x,y\in D_t$ such that $\ell_{G_{t'}}(x)=i$ and $\ell_{G_{t'}}(y)=j$
        \item there exists $x\in D_t$ such that $\ell_{G_{t'}}(x)=i$ and $\ell_{G_{t'}}(y)\neq j$ for all $y\in D_t$, 
        \item there exists $y\in D_t$ such that $\ell_{G_{t'}}(y)=j$ and $\ell_{G_{t'}}(x)\neq i$ for all $x\in D_t$.
\end{enumerate}
Thus,
\begin{multline*}
f_{t}(\alpha,\beta,m) = \min \{f_{t'}(\alpha^{0,0},\beta^{1,1},m), \\f_{t'}(\alpha^{0,0},\beta^{1,0},m), f_{t'}(\alpha^{0,0},\beta^{0,1},m)\}
\end{multline*}

\medskip 

\noindent\textbf{Case 2 ($a_j = 1$):} 
Here, any blue vertex $v$ labeled $j$ in $G_t$ must either be dominated by $D_t$ or included in $S_t$. Since vertices labeled $i$ in $t'$ become labeled $j$ in $t$, we consider $\alpha^{1,1}$ for $G_{t'}$ (vertices labeled $i$ or $j$ must be dominated or deleted). We again distinguish based on $\beta_j$.

\textbf{Case 2.a ($\beta_j = 0$):} 
As in Case 1.a, we consider $\beta^{0,0}$ for $G_{t'}$. Thus, $f_t(\alpha,\beta,m)= f_{t'}(\alpha^{1,1}, \beta^{0,0},m)$.

\textbf{Case 2.b ($\beta_j = 1$):} As in Case 1.b, this occurs in one of three subcases:
\begin{enumerate}
    \item There exist $x,y\in D_t$ with $\ell_{G_{t'}}(x)=i$ and $\ell_{G_{t'}}(y)=j$,
    \item There exists $x\in D_t$ with $\ell_{G_{t'}}(x)=i$ and $\ell_{G_{t'}}(y)\neq j$ for all $y\in D_t$,
    \item There exists $y\in D_t$ with $\ell_{G_{t'}}(y)=j$ and $\ell_{G_{t'}}(x)\neq i$ for all $x\in D_t$.
\end{enumerate}
Therefore, 
\begin{multline*}
f_{t}(\alpha,\beta,m) = \min \{
f_{t'}(\alpha^{1,1},\beta^{1,1},m),\\
f_{t'}(\alpha^{1,1},\beta^{1,0},m),
f_{t'}(\alpha^{1,1},\beta^{0,1},m)
\}
\end{multline*}
This completes the rename nodes.

\paragraph{Edge-Introduce Nodes.} 

Consider an edge-introduce node $t$ with child node $t'$. Node $t$ takes $G_{t'}$ and two labels $i,j$ as input and adds all the edges between the vertices labeled $i$ and the vertices labeled $j$.

    We fix vectors $\alpha$, $\beta$ and a positive integer $m$. We will show how we compute $f_t(\alpha,\beta,m)$ by using $f_{t'}$. 
    Consider a label $\ell$. 
    Notice $\beta$ and $m$ impose the same restrictions on both $G_t$ and $G_{t'}$. The difference is that any vertex labeled $i$ (resp. labeled $j$) dominates all vertices labeled $j$ (resp. $i$) in $G_t$, while this is not necessarily true in $G_{t'}$.
    Therefore, the important difference is that if $D_t$ includes vertices labeled $i$ ($j$ resp.) and the vertices labeled $j$ (labeled $i$ resp.) need to be dominated or deleted (i.e., $a_i = 1$ and $a_j = 1$ resp.), we can consider two cases:
    \begin{itemize}
        \item either the vertices labeled $j$ ($i$ resp.) were dominated or deleted in $G_{t'}$,
        \item or the vertices labeled $j$ ($i$ resp.) were not (necessarily) dominated  in $G_{t'}$, while they are dominated in $G_t$ because of the newly introduced edges.
    \end{itemize}

\medskip 

    To sum up, having fixed $\beta$ and $m$, we have that $f_t(\alpha,\beta,m)=f_{t'}(\alpha',\beta,m)$ for a specific $\alpha'$ such that $a_\ell = a'_\ell$, for all $\ell \notin \{i,j\}$ (where $a'_\ell$ is the element in the $\ell^{th}$ position of $\alpha'$). Thus, it suffices to consider all the possibilities for this $\alpha'$. This is thanks to Property~\ref{prop:dom-future}.

\medskip 
\noindent\textbf{Case 1 ($a_i = a_j = 0$):} Here, vertices labeled $i$ or $j$ need not be dominated, so the new edges do not affect the solution. Thus, $f_t(\alpha,\beta,m) = f_{t'}(\alpha,\beta,m)$.

\medskip 

\noindent\textbf{Case 2 ($a_i=1$ and $a_j = 0$):} Either vertices labeled $i$ were already dominated or included in $S_t$ in $G_{t'}$, or $b_j = 1$ and they are dominated by vertices labeled $j$ due to the new edges. Let $\alpha'$ be identical to $\alpha$ except $a'_{i}=0$. Then:
\begin{itemize}
    \item if $b_j =0$, we set $f_t(\alpha,\beta,m) = f_{t'}(\alpha,\beta,m)$,
    \item otherwise, $f_t(\alpha,\beta,m)~=~\min \{ f_{t'}(\alpha,\beta,m), \\f_{t'}(\alpha',\beta,m) \}$.
\end{itemize}

\medskip 

\noindent\textbf{Case 3 ($a_j=1$ and $a_i = 0$):} Symmetric to Case 2. 

\medskip 

\noindent\textbf{Case 4 ($a_j=1$ and $a_i = 1$):} Here, we distinguish subcases based on $b_i$ and $b_j$. 

\textbf{Case 4.a ($b_i = b_j = 0$):}
Here, $D_t$ cannot include vertices labeled $i$ or $j$, so $f_t(\alpha,\beta,m) = f_{t'}(\alpha,\beta,m)$.

\textbf{Case 4.b ($b_i = 1$, $b_j = 0$):}
Vertices labeled $i$ cannot be dominated by vertices labeled $j$ (since $b_j=0$), but vertices labeled $j$ can be dominated by vertices labeled $i$. Let $\alpha'$ be identical to $\alpha$ except $a'_j=0$. Then, $f_t(\alpha,\beta,m) = \min \{ f_{t'}(\alpha,\beta,m), f_{t'}(\alpha',\beta,m) \}$.

\textbf{Case 4.c ($b_i = 0$, $b_j = 1$):} Symmetric to Case 4.b.

\textbf{Case 4.d ($b_i = 1$, $b_j = 1$):} Here, the minimal $s$ is achieved when:
\begin{itemize}
    \item[(I)] Vertices labeled $i$ are dominated by vertices labeled $j$ and vice versa,
    \item[(II)] Vertices labeled $i$ are dominated by vertices labeled $j$, but vertices labeled $j$ were already dominated in $G_{t'}$,
    \item[(III)] Vertices labeled $j$ are dominated by vertices labeled $i$, but vertices labeled $i$ were already dominated in $G_{t'}$,
    \item[(IV)] Both were already dominated in $G_{t'}$ (new edges do not help).
\end{itemize}
Define:
\begin{itemize}
    \item $\alpha^{(I)}$ identical to $\alpha$ except $a^{(I)}_i = a^{(I)}_j =0$,
    \item $\alpha^{(II)}$ identical to $\alpha$ except $a^{(II)}_i = 0$,
    \item $\alpha^{(III)}$ identical to $\alpha$ except $a^{(III)}_j = 0$.
\end{itemize}
Then:
\begin{multline*}
f_t(\alpha,\beta,m) = \min \{
f_{t'}(\alpha^{(I)},\beta,m),\\
f_{t'}(\alpha^{(II)},\beta,m),
f_{t'}(\alpha^{(III)},\beta,m),
f_{t'}(\alpha,\beta,m)
\}
\end{multline*}
This completes the edge-introduce nodes.

\paragraph{Union Nodes.} 
Consider a union node $t$ and let $t_1$, $t_2$ be the children of $t$ in $T_G$. 
    Recall that $t$ takes as input the labeled graphs $G_{t_1}$ and $G_{t_2}$ and constructs the disjoint union of the two.

    To ease the exposition, we will slightly abuse notations. In particular, we define $\lor : \{0,1\}^{cw} \times \{0,1\}^{cw} \rightarrow \{0,1\}^{cw}$ such that $\lor(\beta_1,\beta_2)=\beta$ where the $i^{th}$ position of $\beta$ is $0$ if and only if the $i^{th}$ position of $\beta_1$ and the $i^{th}$ position of $\beta_2$ are both equal to $0$. 
    
    Since $G_t$ is the disjoin union of $G_{t_1}$ and $G_{t_2}$, the function $f_{t}(\alpha,\beta,m)$ must return the value 
    $\min \{ f_{t_1}(\alpha,\beta_1,m_1) + f_{t_1}(\alpha,\beta_2,m_2) \mid m_1+m_2 = n , \lor(\beta_1, \beta_2) = \beta\}$. 

    This finishes the consideration of the union nodes, as well as the description of the computation of $f_t$ for every type of node $t$ of $T_G$.

\medskip
    
    To finish the proof, we need to prove that the algorithm runs in $4^{cw} |V(G)|^{O(1)}$. Since the we have $|V(G)|^{O(1)}$ nodes in the clique-width expression, we just need to prove that we can compute the function $f_t$, for all nodes $t$, in $4^{cw} |V(G)|^{O(1)}$ time. This is straightforward for all the nodes of the clique-width expression other than the non-union nodes. 
    
    Consider a union node $t$, and let $t_1$ and $t_2$ be the children nodes of $t$.
    We will show that we can use the fast subset convolution technique in order to compute all values of $f_{t}$ in time $4^{cw} |V(G)|^{O(1)}$. 
    For any $\beta = \{b_1,\ldots,b_{cw}\} \in \{0,1\}^{cw}$, we denote by $Y_\beta \subseteq [cw]$ the set of indices such that $b_i = 1$ if and only if $i \in Y_\beta$. 
    
    For each vector $\alpha\in \{0,1\}^{cw}$ and positive integers $m$, $m_1$, $m_2$ define two functions:
    \begin{itemize}
        \item $f^{\alpha,m_1,m_2}_t: \mathcal{P}(cw) \rightarrow \mathbb{N}$ and
        \item $f^{\alpha,m}_{t_j}: \mathcal{P}(cw) \rightarrow \mathbb{N}$, for $j \in \{1,2\}$,
    \end{itemize} 
    
    In particular, for any $Y_\beta \subseteq \mathcal{P}(cw)$ we are setting:    
    \begin{align*}
        & f^{\alpha,m}_{t_j}(Y_{\beta})= f_{t_j}(\alpha, \beta, m), \text{ for all } j \in \{1,2 \} , \text{and} \\
        & f^{\alpha,m_1,m_2}_t (Y_\beta) = \min \{ f_{t_1}(\alpha,\beta_1,m_1) + \\
        & \hspace{3cm} f_{t_2}(\alpha,\beta_2,m_2) \mid Y_{\beta_1} \cup Y_{\beta_2} = Y_\beta\}.
    \end{align*}

    Since we have already computed $f_{t_1}$ and $f_{t_2}$, we can assume that $f^{\alpha,m}_{t_j}(Y_{\beta})$ is given for all $\alpha$, $m$, $Y_{\beta}\subseteq [cw]$ and $j \in \{1,2\}$. We will show how we compute the values of $f^{\alpha,m_1,m_2}_t$ for fixed $\alpha$, $m_1$ and $m_2$. 

    Notice that
    \begin{align*}
    & f^{\alpha,m_1,m_2}_t (Y_\beta) \\
    & = \min \{ 
    f_{t_1}(\alpha,\beta_1,m_1) + f_{t_2}(\alpha,\beta_2,m_2) \\
    & \hspace{3cm} \mid (Y_{\beta_1} \cup Y_{\beta_2})= Y_\beta
    \} \\
    & = \min_{Y_{\beta_1} \cup Y_{\beta_2}= Y_\beta} 
    \{
    f^{\alpha,m_1}_{t_1}(Y_{\beta_1}) + f^{\alpha,m_2}_{t_2}(Y_{\beta_2}) 
    \}
    \end{align*}
    Assume that we have fixed $\alpha$, $m_1$, and $m_2$. Observe that the values of $f^{\alpha,m_1}_{t_1}$ and $f^{\alpha,m_2}_{t_2}$ are given for all possible inputs and are bounded by the size of the graph. Thus, we can compute all values of $f^{\alpha,m_1,m_2}_t$ in time $2^{cw}(|V(G_t)|)^{O(1)}$ by using the fast subset convolution~\cite{CyganFKLMPPS15}. 
    Since we have $2^{cw}(|V(G_t)|)^{O(1)}$ different triplets $\alpha,m_1,m_2$, we can compute all functions $\alpha,m_1,m_2$, $f^{\alpha,m_1,m_2}_t$, for all such triplets, in time $4^{cw}(|V(G_t)|)^{O(1)}$.

    Finally, we need to show how we compute the function $f_t(\alpha,\beta,m)$. Note that:
    \[
    f_t(\alpha,\beta,m) = \min_{m_1+m_2 = m} \{ f^{\alpha,m_1,m_2}_t (Y_\beta) \}
    \]

    At this point, it suffices to compute the value of the $f_t(\alpha,\beta,m)$ function for every possible triplet of $\alpha,\beta,m$, which is doable in $4^{cw}(|V(G_t)|^{O(1)}$.
\end{proof}
\fi

\ifshort
\begin{sketch}
We will perform dynamic programming over the clique-width expression $T_G$.
Let $G_t$ denote the graph constructed by the expression defined by the subtree rooted at node $t$, and let $V_t$ denote the vertex set of $G_t$. Our algorithm will compute a set $S\subseteq B$ of at most $k$ vertices such that there exists a set $D\subseteq R$ of at most $\gamma$ vertices that dominates all the blue vertices of $G\setminus S$, if such sets do indeed exist.  
For each $G_t$ associated with node $t$ of the given clique-width expression, the algorithm maintains information about:
\begin{itemize}
\item which red vertices are included in set $D$,
\item which blue vertices are included in set $S$,
\item which blue vertices already have neighbors in $D$, and
\item which blue vertices do not yet have neighbors in $D$.
\end{itemize}

Before presenting the actual information we maintain, we prove some properties.
\begin{property}\label{prop:dom-future}
Let $S$ be a minimal set such that $\gamma_{RB}(G\setminus S) \le \gamma$, and let $D$ be a minimum dominating set of $G\setminus S$. For any node $t$ of $T_G$, let $G_t$ be the graph represented by the clique-width expression of the subtree rooted at $t$. If there exists a vertex $v\in V_t\setminus S$ such that $v$ is not dominated by $D\cap V_t$, then we can assume that for all $w\in V_t$ with $\ell_{G_t} (v) = \ell_{G_t} (w)$, the following holds:
\begin{enumerate}
\item $w \notin S$, and
\item there exists a vertex $u \in N(w)\cap D$ (the same for all $w$, though not necessarily unique) and the edge $wu$ has not yet been introduced .
\end{enumerate}
\end{property}

Property~\ref{prop:dom-future} allows us to assume that for any graph $G_t$ associated with node $t$, the blue vertices of label $i$ in $G_t$ satisfy: 
\begin{itemize}
    \item either we decide which are dominated and which are included in $S$, 
    \item or all vertices labeled $i$ at this point will be dominated by one vertex $u$ (not necessarily in $V_t$), with edges between $u$ and these vertices being introduced at an ancestor node of $t$.
\end{itemize}

Before proceeding, we explain the intuition behind the algorithm. 
For any graph $G_t$ where $t \in T_G$, we solve the following problem. 

Given a labeled red-blue graph $G$ with $cw$ labels, two binary vectors $\alpha=\{a_1,\ldots,a_{cw}\}$, $\beta=\{b_1,\ldots,b_{cw}\}$, and an integer $m \in [n]$, determine the smallest integer $s$ such that: 
\begin{itemize}
    \item There exist two sets $S \subseteq B$ and $D \subseteq R\setminus S$ where:
    \begin{itemize}
        \item $|D|\le m$ and $|S| = s$, 
        \item if $a_i=1$, then for any $u \in B \cap \{ v \mid \ell_{G_t}(v)=i \}$, either $u \in S$ or $N[u] \cap D \neq \emptyset$, 
        \item if $b_i = 0$, then for all $u \in R \cap \{ v \mid \ell_{G_t}(v)=i \}$ we have that $u \notin D$,
        \item if $b_i = 1$, then there exists a $u \in R \cap \{ v \mid \ell_{G_t}(v)=i \}$ such that $u \in D$. 
    \end{itemize}    
\end{itemize}

Before explaining how we solve this, let us interpret the restrictions imposed by $\alpha$ and $\beta$. The $i^{th}$ position of $\alpha$, $a_i$, relates to blue vertices labeled $i$. Specifically, $a_i=0$ means blue vertices with label $i$ need not be dominated by $D$, while $a_i=1$ means they must either be dominated or deleted (i.e., included in $S$).
For $\beta$, $b_i=0$ indicates that no red vertex labeled $i$ is included in $D$, while $b_i=1$ indicates that at least one such vertex must be in $D$. The integer $m$ is the maximum allowed size of $D$, and the function returns the size of the smallest set $S$ that must be deleted for such a $D$ to exist.

We compute a function $f_t$ that takes vectors $\alpha$, $\beta$, and an integer $m$, returning the value of $s$ for the graph $G_t$ associated with node $t$ of $T_G$. 
Since $f_t$ provides answers for all possible inputs $\alpha$, $\beta$, $m$, we can use $f_r$ (where $r$ is the root of $T_G$) to determine whether $\mathcal{I}$ is a yes-instance.

The rest of the proof consists of the computation of $f_t$ for all nodes $t$ of $T_G$, which is done through dynamic programming. When considering a node $t$, we subscript the hypothetical sets $S$ and $D$ with $t$ to distinguish between nodes. If no pair $(D_t,S_t)$ satisfies the restrictions of $f_t(\alpha,\beta,m)$, the function returns $+\infty$.

The construction is inductive, starting with the vertex-introduce nodes (the leaves of $T_G$).

\paragraph{Vertex-Introduce Nodes.} Consider a vertex-introduce node $t$. Since these nodes are leaves, $V_t$ contains only the newly introduced vertex, say $v$, and the only label is $\ell_{G_t}(v)$. We consider several cases. 

\textbf{Case 1: $v\in R\setminus B$.} Here, $\gamma_{RB}(G_t)= 0$, so no vertices need to be included in $S$. The only reason not to return $0$ is if $\beta$ requires including vertices in $D_t$ that do not exist. Let $\beta_1 = \textbf{0}$ and $\beta_2 \in \{0,1\}^{cw}$ be the vector with all positions $0$ except the $\ell_{G_t}(v)$ position. We set:
\begin{itemize}
        \item $f_t(\alpha,\beta_1,m) = 0$, for all vectors $\alpha$ (as there is no blue vertex) and $m \in [n]\cup \{0\}$, 
        \item $f_t(\alpha,\beta_2,m)= 0$, for all vectors $\alpha$ (as there is no blue vertex) and $m \in [n]$, and 
        \item $f_t(\alpha,\beta,m)= +\infty$, for all the other inputs.
    \end{itemize}

\textbf{Case 2: $v\in B\setminus R$.} Here, $v \in B$. Since there are no red vertices, $f(\alpha,\beta,m)= +\infty$ for any $\alpha \in \{0,1\}^{cw}$, $\beta \neq \textbf{0}$, and $m \ge 0$. 
If $\beta = \textbf{0}$, $D_t$ must be empty. Let $i = \ell_{G_t}(v)$. Then:
\begin{itemize}
    \item $f(\alpha,\textbf{0},m)= 0$ for all $\alpha$ with $a_i=0$, 
    \item $f(\alpha,\textbf{0},m)= 1$ otherwise.
\end{itemize}

\textbf{Case 3: $v \in B\cap R$.} Here, $\beta$ can only take two valid values:
\begin{itemize}
    \item $\beta=\beta_1$ where $b_i=1$ for $i=\ell_{G_t}(v)$ and $b_j=0$ otherwise (indicating $v \in D_t$),
    \item $\beta=\beta_2=\textbf{0}$ (indicating $v \notin D_t$).
\end{itemize}
For $\beta \notin\{\beta_1,\beta_2\}$, we set $f_t(\alpha,\beta,m) = +\infty$. 
For $\beta=\beta_1$, we set $f_t(\alpha, \beta_1, m) = 0$ for $m\ge 1$, and $f_t(\alpha, \beta_1, m)= +\infty$ for $m=0$ (no valid $D_t$).
For $\beta=\beta_2$, we consider $\alpha$:
\begin{itemize}
    \item If $a_{\ell_{G_t}(v)}=1$, then $f_t(\alpha,\beta_2,m) = 1$ for all $m\ge 0$.
    \item If $a_{\ell_{G_t}(v)}=0$, then $f_t(\alpha,\beta_2,m) = 0$ for all $m\ge 0$.
\end{itemize}
This completes the vertex-introduce nodes. 

\paragraph{Rename Nodes.} Consider a rename node $t$ with child node $t'$. Node $t$ takes $G_{t'}$ as input and renames label $i$ to $j$. We assume $f_{t'}$ has been computed and use it to compute $f_{t}$. Note that $G_t$ and $G_{t'}$ are the same graphs with different labels. 

To compute $f_t(\alpha,\beta,m)$, note that any $\beta= \{b_1,\ldots,b_{cw}\}$ with $b_i\neq 0$ is invalid because no vertices labeled $i$ remain after renaming. Thus, $f_t(\alpha,\beta,m) = +\infty$ for any such $\beta$. 

Now suppose $\beta = \{b_1,\ldots,b_{cw}\}$ with $b_i=0$.
For $(p,q)\in \{0,1\}^2$, define $\beta^{p,q}=\{b^{p,q}_1,\ldots,b^{p,q}_{cw}\}$ where $b^{p,q}_i= p$, $b^{p,q}_j= q$, and $b^{p,q}_k = b_k$ for $k \in [cw]\setminus \{i,j\}$. Similarly, define $\alpha^{p,q} = \{a^{p,q}_1,\ldots,a^{p,q}_{cw}\}$ where $a^{p,q}_i= p$, $a^{p,q}_j= q$, and $a^{p,q}_k = a_k$ for $k \in [cw]\setminus \{i,j\}$.
We consider two cases based on $a_j$.

\medskip 

\noindent\textbf{Case 1 ($a_j = 0$):} 
Here, blue vertices labeled $j$ in $G_t$ need not be dominated or deleted. Thus, we consider $\alpha^{0,0}$ for $G_{t'}$. We further distinguish based on $\beta_j$.

\textbf{Case 1.a ($b_j = 0$):} Since $D_t$ cannot include vertices labeled $j$ in $G_t$, we consider $\beta^{0,0}$ for $G_{t'}$ (no vertices labeled $i$ or $j$ in $D_{t'}$). Thus, $f_t(\alpha,\beta,m) = f_{t'}(\alpha^{0,0},\beta^{0,0},m)$.

\textbf{Case 1.b ($b_j = 1$):} This occurs in one of three subcases:
\begin{enumerate}
    \item there exist $x,y\in D_t$ such that $\ell_{G_{t'}}(x)=i$ and $\ell_{G_{t'}}(y)=j$
        \item there exists $x\in D_t$ such that $\ell_{G_{t'}}(x)=i$ and $\ell_{G_{t'}}(y)\neq j$ for all $y\in D_t$, 
        \item there exists $y\in D_t$ such that $\ell_{G_{t'}}(y)=j$ and $\ell_{G_{t'}}(x)\neq i$ for all $x\in D_t$.
\end{enumerate}
Thus,
\begin{multline*}
f_{t}(\alpha,\beta,m) = \min \{f_{t'}(\alpha^{0,0},\beta^{1,1},m),\\ 
f_{t'}(\alpha^{0,0},\beta^{1,0},m), f_{t'}(\alpha^{0,0},\beta^{0,1},m)\}  
\end{multline*}

\smallskip 

\noindent\textbf{Case 2 ($a_j = 1$):} 
Here, any blue vertex $v$ labeled $j$ in $G_t$ must either be dominated by $D_t$ or included in $S_t$. Since vertices labeled $i$ in $t'$ become labeled $j$ in $t$, we consider $\alpha^{1,1}$ for $G_{t'}$. We again distinguish based on $\beta_j$.

\textbf{Case 2.a ($\beta_j = 0$):} 
As in Case 1.a, we consider $\beta^{0,0}$ for $G_{t'}$. Thus, $f_t(\alpha,\beta,m)= f_{t'}(\alpha^{1,1}, \beta^{0,0},m)$.

\textbf{Case 2.b ($\beta_j = 1$):} Similarly to Case 1.b.
\begin{multline*}
f_{t}(\alpha,\beta,m) = \min \{f_{t'}(\alpha^{1,1},\beta^{1,1},m),\\ 
f_{t'}(\alpha^{1,1},\beta^{1,0},m), f_{t'}(\alpha^{1,1},\beta^{0,1},m)\}
\end{multline*}
This completes the rename nodes.

\paragraph{Edge-Introduce Nodes.} 

Consider an edge-introduce node $t$ with child node $t'$. Node $t$ takes $G_{t'}$ and two labels $i,j$ as input and adds all the edges between the vertices labeled $i$ and the vertices labeled $j$.

    We fix vectors $\alpha$, $\beta$ and a positive integer $m$. We will show how we compute $f_t(\alpha,\beta,m)$ by using $f_{t'}$. 
    Consider a label $\ell$. 
    Notice $\beta$ and $m$ impose the same restrictions on both $G_t$ and $G_{t'}$. The difference is that any vertex labeled $i$ (resp. labeled $j$) dominates all vertices labeled $j$ (resp. $i$) in $G_t$, while this is not necessarily true in $G_{t'}$.
    Therefore, the important difference is that if $D_t$ includes vertices labeled $i$ ($j$ resp.) and the vertices labeled $j$ (labeled $i$ resp.) need to be dominated or deleted (i.e., $a_i = 1$ and $a_j = 1$ resp.), we can consider two cases:
    \begin{itemize}
        \item either the vertices labeled $j$ ($i$ resp.) were dominated or deleted in $G_{t'}$,
        \item or the vertices labeled $j$ ($i$ resp.) were not (necessarily) dominated  in $G_{t'}$, while they are dominated in $G_t$ because of the newly introduced edges.
    \end{itemize}

\medskip 

    To sum up, having fixed $\beta$ and $m$, we have that $f_t(\alpha,\beta,m)=f_{t'}(\alpha',\beta,m)$ for a specific $\alpha'$ such that $a_\ell = a'_\ell$, for all $\ell \notin \{i,j\}$ (where $a'_\ell$ is the element in the $\ell^{th}$ position of $\alpha'$). Thus, it suffices to consider all the possibilities for this $\alpha'$. This is thanks to Property~\ref{prop:dom-future}.

\paragraph{Union Nodes.} 
Consider a union node $t$ and let $t_1$, $t_2$ be the children of $t$ in $T_G$. 
    Recall that $t$ takes as input the labeled graphs $G_{t_1}$ and $G_{t_2}$ and constructs their disjoint union.

    To ease the exposition, we will slightly abuse notations. In particular, we define $\lor : \{0,1\}^{cw} \times \{0,1\}^{cw} \rightarrow \{0,1\}^{cw}$ such that $\lor(\beta_1,\beta_2)=\beta$ where the $i^{th}$ position of $\beta$ is $0$ if and only if the $i^{th}$ position of $\beta_1$ and the $i^{th}$ position of $\beta_2$ are both equal to $0$. 
    
    Since $G_t$ is the disjoin union of $G_{t_1}$ and $G_{t_2}$, the function $f_{t}(\alpha,\beta,m)$ must return the value 
    $\min \{ f_{t_1}(\alpha,\beta_1,m_1) + f_{t_1}(\alpha,\beta_2,m_2) \mid m_1+m_2 = n , \lor(\beta_1, \beta_2) = \beta\}$. 

    This finishes the consideration of the union nodes, as well as the description of the computation of $f_t$ for every type of node $t$ of $T_G$.

\medskip
    
    To finish the proof, we need to prove that the algorithm runs in $4^{cw} |V(G)|^{O(1)}$. Since the we have $|V(G)|^{O(1)}$ nodes in the clique-width expression, we just need to prove that we can compute the function $f_t$, for all nodes $t$, in $4^{cw} |V(G)|^{O(1)}$ time. This is straightforward for all the nodes of the clique-width expression other than the union nodes. 
    Achieving this running time for the union nodes requires some efficient computations, which are achieved thanks to the fast subset convolution technique~\cite{CyganFKLMPPS15}. 
\end{sketch}
\fi  

Finally, unless the \textsc{Strong Exponential Time Hypothesis}(\textsc{SETH})~\cite{IP01} fails, the \textsc{Dominating Set} problem cannot be solved in $(4 -\varepsilon)^{cw} n^{O(1)}$ time~\cite{KatsikarelisLP19}. In view of the connection between this problem and \mrbcShort{}, it follows that the algorithm presented in Theorem~\ref{thm:cw} is asymptotically optimal (under the \textsc{SETH}).

\section{Conclusion}
Despite the plethora of optimization problems that have been introduced to study the relationships between service providers and clients, we found no prior work that formally models the scenario where some clients also serve as providers for other clients. To overcome this limitation, we introduced the \mrbcShort{} problem and initiated its formal study. Our work opens two main promising research directions. The first concerns the efficient algorithms we present here. Indeed, it would be interesting to implement our algorithms and check their performance in practice. The second path to follow stems from the observation that solving the \mrbcShort{} problem is at least as hard as solving the \textsc{Dominating Set}; thus, there is ample motivation to develop heuristic solutions for \mrbcShort{}.

\bibliography{aaai2026}

\end{document}